\newcommand{\en}{\sl e}
\newcommand{\DL}{\textsc{Delivery-Test-on-the-Line\;}}
\newcommand{\DB}{\textsc{Delivery2-on-the-Line}}
\newcommand{\AM}{\textsc{AM}}
\newcommand{\CL}{\textsc{Convergecast-on-the-Line}}
\newcommand{\BL}{\textsc{Broadcast-on-the-Line}}
\newcommand{\RDC[1]}{\overrightarrow{\mathcal B}(#1)}
\date{}
  \spnewtheorem{observation}[theorem]{Observation}{\bfseries}{\itshape}
\def\convergecast{convergecast\xspace}
\def\broadcast{broadcast\xspace}
  \title{Algorithms for 
 Communication Problems \\for Mobile Agents  Exchanging Energy }
  \author{
    Jerzy Czyzowicz\inst{2}\thanks{
      Supported by NSERC grant of Canada
    }
    \and
    Krzysztof Diks\inst{1}\thanks{
      Supported by the grant NCN2014/13/B/ST6/00770 of the Polish Science Center.
    }
    \and
    Jean Moussi\inst{2}
    \and
  Wojciech Rytter
    \inst{1}\thanks{
      Supported by the grant NCN2014/13/B/ST6/00770 of the Polish Science Center.
    }}
\institute{
    Faculty~of Mathematics, Informatics and Mechanics,\\
    University of Warsaw, Warsaw, Poland\
    \email{[diks,rytter]@mimuw.edu.pl}
    \email{[jurek,Jean.Moussi]@uqo.ca}
    \and
    D\'epartement d'informatique, Universit\'e du Qu\'ebec \\ en Outaouais, Gatineau,
    Qu\'ebec,
    Canada
  }
\newcommand{\myskip}{\vskip 0.4cm \noindent}
\begin{document}
  \maketitle
  \begin{abstract}
We consider  communication problems in the setting of mobile agents deployed in an
edge-weighted network. The assumption of the paper is that each agent has some energy that it  can transfer to any 
other agent when they meet (together with  the  information it holds).
 The paper deals with three communication problems: data delivery,
convergecast and broadcast.
These problems are posed for a centralized scheduler which has full knowledge of the instance.
   It is already known that, without energy exchange, all three  problems are NP-complete even if
the network is
   a line.  Surprisingly, if we allow the agents to exchange energy,  we show that
 all three problems are polynomially
   solvable  on trees and  have linear  time algorithms on the
line. On the other hand for general undirected and directed graphs we show that
   these problems, even if energy exchange is allowed, are still NP-complete.
   \end{abstract}
\section{Introduction}\label{sect:intro}

A set of $n$ agents is placed at nodes of an edge-weighted graph $G$. An edge weight represents its length, i.e., the distance between its endpoints along the edge. Each agent has an amount of energy (possibly distinct for different agents). Agents walk in a continuous way along the network edges using amount of energy proportional to the distance travelled. 

An agent may stop at any point of a network edge (i.e. at any distance from the edge endpoints, up to the edge weight). Each agent has memory in which it can store information. 

When two agents meet, one of them can transfer a portion of currently possessed energy to another one. Moreover, two meeting agents exchange their currently possessed information, so that after the meeting each of them keeps in its memory the union of pieces of information previously hold by each of them. 

We assume that each agent has sufficient memory  to store the entire information initially belonging to all agents.

\noindent Our algorithms work as centralized schedulers having full knowledge of the instance.
\myskip
\noindent
We consider three problems:
\begin{enumerate}
\item \textbf{{\em Data delivery} problem:} Given two nodes $s,t$ of $G$, is it possible to transfer the initial packet of information placed at node $s$ to node $t$?
\item \textbf{{\em Convergecast} problem:} Is it possible to transfer the initial information possessed by each agent to a fixed agent ? (See Fig.1)
\item \textbf{{\em Broadcast} problem:} Is it possible to transfer the initial information of some agent to all other agents ? (See Fig.1)
\end{enumerate}

  \begin{figure}[htb]
%\mbox{ \ }
\centering
\includegraphics[width=7.cm]{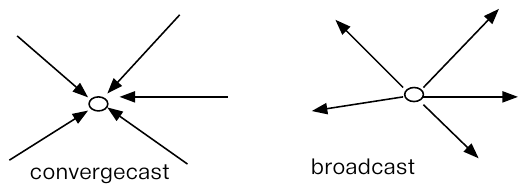}
 \label{general}
 \caption{Schematic view of convergecast and broadcast.}
\end{figure}
We will look for schedules of agent movements which will not only result in completing the desired task, but also attempt to maximize the unused energy. We call such schedules {\em optimal}.
%In order to treat all problems in a uniform way we assume through the paper that points $s$ and $t$ in the data delivery problem may be identified as virtual dummy agents, each having zero energy, initially placed at $s$ and $t$.
%Since the measure of efficiency in this paper is the energy amount we do not try to optimize the other resources (e.g. global execution time, memory size of the agents, etc.). In particular, 
We conservatively suppose that, whenever two agents meet, they automatically exchange the entire information they held. This information exchange procedure is never explicitly mentioned in our algorithms, supposing, by default, that it always takes place when a meeting occurs.

\subsection{Our results}\label{s:related}
We show that
 all three communication problems are polynomially
   solvable  on trees and  have linear  time algorithms on the
line. On the other hand for general undirected and directed graphs we show that
   these problems, even if energy exchange is allowed, are still NP-complete.

\subsection{Related work}\label{s:related}

Recent development in the network industry triggered  the research interest in mobile agents computing.  Several applications involve physical mobile devices like robots, motor vehicles or various wireless gadgets. Mobile agents are sometimes interpreted as
software agents, i.e., programs migrating from host to host in a network, performing some specific tasks.
Examples of agents also include living beings: humans (e.g. soldiers or disaster relief personnel) or animals. Most studied problems for mobile agents involve some sort of environment search or exploration (cf. \cite{AH,BS,DKS,FGKP}). In the case of a team of collaborating mobile agents, the challenge is to balance the workload among the agents in order to minimize the exploration time. However this task is often hard (cf. \cite{FHK}), even in the case of two agents in a tree, \cite{AB}.

The task of \convergecast is important when agents possess partial information about the network (e.g. when individual agents hold measurements
performed by sensors located at their positions)
and the aggregate information is needed to make some global decision based on all measurements. The {\convergecast} problem is often considered as a version of the data aggregation question (e.g. \cite{KEW,RV}) and it has been investigated mainly in the context of  wireless and sensor networks, where the energy consumption is an important issue (cf. \cite{AGS,KK}). 

The task of \broadcast is useful, e.g., when a designated leader needs to share its information with collaborating agents in order to perform together some future tasks. 

The \broadcast problem for stationary processors has been extensively studied both in the case of the message passing model, (e.g. \cite{AGP}), and for the wireless model, (see \cite{BGI}).

%In tasks involving moving agents the distance traveled is usually the main optimization criterion. %Exploration of an unknown environment by a single agent has been studied for graphs, e.g. \cite{AH,ABRS,DP}, or geometric terrains, \cite{BCR,BlRS}. However, as a single agent cannot explore a graph without using some landmarks (see \cite{BFRSV}), exploration by multiple agents has been widely investigated (cf. \cite{BS,CGGKM,DDKPU,DKS,FGKP}). In the case of a team of collaborating mobile agents, the challenge is to balance the workload among the agents in order to minimize the exploration time. However this task is often hard (cf. \cite{FHK}), even in the case of two agents in a tree, \cite{AB}.

The power awareness question has been studied in different contexts. Energy management of (not necessarily mobile) computational devices has been studied in  \cite{Albers}. To reduce energy consumption of computer systems the methods proposed include power-down strategies (see \cite{Albers,AIS,ISG}) or speed scaling (cf. \cite{YDS}). Most of  research on energy efficiency considers optimization of overall power used. When the power assignments are made by the individual system components, similar to our setting, the optimization problem involved has a flavor of load balancing (cf. \cite{Azar}).

%Centralized \cite{Bunde,SL,YDS} and distributed  \cite{Albers,Ambuhl,AIS,ISG} algorithms have been proposed. In wireless sensor and ad hoc networks the power awareness has been studied in the context of data communication (e.g. \cite{Ambuhl,SL} where efficient routing protocols have been proposed. However in the applications of physical mobile agents the energy is mostly used for mobility purpose rather than communication. Most of this research on energy efficiency concerned optimization of overall power used. When the power assignments are made to the individual system components, similar to our setting, the optimization problem involved has a flavor of load balancing (cf. \cite{Azar}).

%The {\convergecast} problem is often considered as a version of the data aggregation question (e.g. \cite{KEW,RV}) and it has been investigated mainly in the context of  wireless and sensor networks, where the energy consumption is an important issue (cf. \cite{AGS,KK}). \cite{CJABL} studied the scheduling problem when data has to be transmitted to mobile clients while they travel within the communication range of wireless stations. \cite{KK} studied a randomized {\convergecast} algorithm for geometric ad-hoc networks and investigated the trade-offs between the energy used and the {\convergecast} latency. 
The \broadcast problem for stationary processors has been extensively studied both in the case of the message passing model, (e.g. \cite{AGP}), and for the wireless model, (see \cite{BGI}).

The problem of communication by energy-constrained mobile agents has been investigated in \cite{Anaya}. The agents of \cite{Anaya} all have the same initial energy and they perform efficient convergecast and broadcast in line networks. However the same problem for tree networks is proven to be strongly $NP$-complete in \cite{Anaya}. 

The closely related problem of data delivery, when the information has to be transmitted between two given network nodes by a set of energy constrained agents has been studied in \cite{NP}. This problem is proven to be $NP$-complete in \cite{NP} already for line networks, if the initial energy values may be distinct for different agents.  However, in the setting studied in \cite{Anaya,NP}, the agents do not exchange energy. In the present paper we show that the situation is quite different if the agents are allowed to transfer energy between one another.

\section{The line environment}

In this section we suppose that we are given a collection of agents $\{0,1, 2, \ldots ,n-1\}$  on the line.  Each agent $i$ is initially placed at position $a_i$ on the line and has initial energy $\en(i)$. 
We investgate delivery, convergecast and broadcast problems separately. The are solved using auxiliary
table.s

\subsection{Data delivery on the line}\label{sect:data-line}
We start with the delivery problem from point $s$ to $t$. Assume w.l.o.g. that $a_i < a_j$ for $i<j$ and $s<t$ .

The problem can be immediately reduced to the situation
$s=a_1,\; t=a_n.$
 Otherwise  the first agent is going to $s$ from left to right, swallawing energy of
encountered agents, symetrically the righmost agent is going right to left until reaching $t$. Then we can 
reduce $n$ and renumber agents setting $a_1=s,\; a_n=t$.

\noindent 
%We want to decide whether 
%the agents can transfer the information (data packet) from a given point $s$ to target position  $t$.
%We consider first a simpler version of the problem when all agents are
%placed between points $s$ and $t$. By convention we will a
\vskip 0.1cm
\noindent Our first algorithm is only a decision version. Its main purpose is to show how certain useful table can be computed, 
all other algorithms are based on computing similar type of tables. 
\myskip
Consider the partial delivery problem ${\mathcal{D}}_i$ which is the original problem with agents larger than $i$ removed,
together with their energy, and the goal is to deliver the packet from the 1-st agent to the $i$-th agent.
\myskip
We say that the problem ${\mathcal{D}}_i$ is {\em solvable} iff such a delivery is possible.
\myskip
We define the following table $\overrightarrow{\Delta}$:
\begin{itemize}
\item If ${\mathcal{D}}_i$ is not solvable then $\overrightarrow{\Delta}(i)\,=\, -\delta$, 
where $\delta$ is the {\em minimal} energy which needs to be added
to $e_i$ (to energy of $i$-th agent) to make ${\mathcal{D}}_i$  solvable.
\item
If ${\mathcal{D}}_i$ is solvable then $\overrightarrow{\Delta}(i)$ is the {\em maximal} energy which can remain in point $a_i$ after delivering
the packet from $a_1$ to $a_i$. Possibly $\overrightarrow{\Delta}(i)>e(i)$ since during delivery the partial energy of some
other agents can be moved to point $a_i$.
\end{itemize}
\noindent 
%An example of the table $\overrightarrow{\Delta}$ is illustrated in Fig.1.
%
\vskip 0.1cm \noindent Assume in the algorithm that points $s$ and $t$ are the starting points $s=a_1$ and $t=a_n$.
In our algorithm the statements of the form $x+=y$ are equiavalnet to $x:=x+y$.
  \vskip 0.2cm  \begin{small}
    \begin{center}
    \fbox{\vspace*{0.2cm}
    \begin{minipage}{10cm}
    \vspace*{0.3cm} \noindent
     \hspace*{0.2cm} {\bf ALGORITHM} \DL;
\vskip .2cm
\hspace*{0.5cm} \{\,Decision version and computation of table $\overrightarrow{\Delta}$\,\}
\vskip 0.2cm
1.  \hspace*{0.3 cm} {\bf for each } $i\in [0..n-1]$ {\bf do} $\overrightarrow{\Delta}(i):=\en(i);\, $\vskip .3cm
2. \hspace*{0.3 cm}  {\bf for} $i=1$ to $n$  {\bf do}\vskip .3cm
3. \hspace*{0.8cm} $d:= a_i-a_{i-1};$  \vskip 0.3cm
4. \hspace*{0.8cm} {\bf  if} $\overrightarrow{\Delta}(i-1) \geq d$  {\bf  then} $\overrightarrow{\Delta}(i)\;\mbox{+=} \;\overrightarrow{\Delta}(i-1)-d$\vskip 0.3cm
5. \hspace*{0.8cm}   {\bf  else if} $\overrightarrow{\Delta}(i-1) \geq 0$ {\bf  then} 
$\overrightarrow{\Delta}(i)\;\mbox{+=} \;2 (\overrightarrow{\Delta}(i-1)-d)$ \vskip 0.3cm
6. \hspace*{0.8cm}   {\bf  else} $ \overrightarrow{\Delta}(i)\;\mbox{+=}\;\overrightarrow{\Delta}(i-1)-2d$; 
\vskip 0.3cm
8. 
 \hspace*{0.2cm} Delivery from $a_1$ to $a_n$ is possible iff $\overrightarrow{\Delta}(n)\geq 0$

 \vspace*{0.2cm}
  \end{minipage}
  }
  \end{center}
  \end{small}
\myskip
\begin{example}\label{example1}
Assume $[a_1,a_2,\ldots a_5]\;=\; [0,\, 10,\, 20,\, 30,\, 40,\, 50]$, and \vskip 0.1cm
$[e(1),e(2),\ldots e(5)]\;=\; [0,\, 24,\, 10,\, 40,\, 0].$
Then 
$\overrightarrow{\Delta}\;=\;[0,\,4,\, -2,\, 18,\,8].$
\end{example}
%------------- skomentowany rysunekk ------------
\begin{comment}  
  \begin{figure}[htb]
%\mbox{ \ }
\centering
\includegraphics[width=12.0cm]{Przyklad-account.pdf}
 \label{account}
 \caption{An example with three mobile agents, i.e. three collection points for the traveller, whose account state $A$ is listed each time it is modified}
%
\vspace*{-0.4cm}
\end{figure}
\end{comment}
%
%\begin{center}
%\begin{verbatim}
% e(0)=0      e(1)=24    e(2)=10     e(3)=30      e(4)=0
%    0----------10----------20----------30----------40
%              A=4        A=-2      A(3)=18      A(4)=8
%\end{verbatim}
%\end{center}
\vskip 0.1cm
  \noindent {\bf Remark.} The values of $\overrightarrow{\Delta}(i)$ are not needed to
solve the decision-only version. However they will be useful
  in creating the delivery schedule and also in the {\em convergecast} problem.
\vskip 0.1cm
  \noindent
  \begin{lemma}
  \label{lm:ass}
The algorithm \DL correctly computes the table $\overrightarrow{\Delta}$ (thus it solves the decision
version of the delivery problem) in linear time.
  \end{lemma}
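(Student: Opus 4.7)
The plan is to prove correctness by strong induction on $i$, using the value of $\overrightarrow{\Delta}(i-1)$ to justify the formula computed at step $i$. The linear-time bound is immediate: the algorithm consists of one initialization pass and a single for-loop from $1$ to $n$, each iteration performing a constant number of comparisons and additions.

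The base case $i=1$ is trivial: the problem $\mathcal{D}_1$ asks to deliver the packet from $a_1$ to itself, so $\overrightarrow{\Delta}(1)=e(1)$, matching the initialization. For the inductive step, the structural observation I would use is that, since the agents lie on the line with $a_1<a_2<\cdots<a_i$, the packet must pass through $a_{i-1}$ at some moment in every feasible schedule. Hence any strategy for $\mathcal{D}_i$ decomposes into (i) a sub-strategy that brings the packet to $a_{i-1}$ using agents from $\{1,\ldots,i-1\}$, possibly supplemented with energy that agent $i$ carries leftward from $a_i$, and (ii) a final transport across the segment $[a_{i-1},a_i]$.

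From here I would analyse the three cases in turn. When $\overrightarrow{\Delta}(i-1)\geq d$, agent $i$ stays at $a_i$ and the packet-holder walks the remaining distance $d$ using part of the $\mathcal{D}_{i-1}$ residual, yielding $e(i)+\overrightarrow{\Delta}(i-1)-d$. When $0\leq\overrightarrow{\Delta}(i-1)<d$, the optimal strategy is a \emph{partway meeting}: the packet-holder walks as far right as possible (distance $\overrightarrow{\Delta}(i-1)$, exhausting its residual) while agent $i$ walks $d-\overrightarrow{\Delta}(i-1)$ leftward to meet it and then returns with the packet; the total distance walked is $2d-\overrightarrow{\Delta}(i-1)$, so by conservation the residual at $a_i$ is $e(i)+2\overrightarrow{\Delta}(i-1)-2d$. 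When $\overrightarrow{\Delta}(i-1)<0$, the agents $\{1,\ldots,i-1\}$ alone cannot reach $a_{i-1}$, so agent $i$ must walk to $a_{i-1}$, deposit $e(i)-d$ units of energy to complete the $\mathcal{D}_{i-1}$ delivery, and then return with the packet, yielding $e(i)+\overrightarrow{\Delta}(i-1)-2d$ (possibly negative, encoding the deficit).

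Each case requires an achievability direction (exhibit the schedule) and an optimality direction (no schedule does better). Achievability is immediate from the explicit constructions above. For optimality, I would invoke the conservation identity that the residual at $a_i$ equals the total initial energy minus the total distance walked by all agents, reducing the problem to minimising the total distance. In Cases~1 and~2 a short exchange argument pins down the optimal meeting point. The main obstacle I expect is the optimality argument in Case~3, where one must rule out strategies in which agent $i$ walks past $a_{i-1}$ or otherwise interleaves its energy contribution with the $\mathcal{D}_{i-1}$ sub-strategy in a subtle way. The inductive hypothesis supplies the key fact that $-\overrightarrow{\Delta}(i-1)$ is the \emph{minimum} energy that must be injected at $a_{i-1}$ to make $\mathcal{D}_{i-1}$ feasible; together with the observation that each unit of extra leftward distance walked by agent $i$ beyond $a_{i-1}$ costs two units of energy (going and returning), this forces the optimum to equal the formula computed in line~6.
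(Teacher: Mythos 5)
Your proof follows essentially the same route as the paper's: induction on $i$ with three cases matching statements 4--6 of the algorithm, and the same explicit schedules (direct transport when $\overrightarrow{\Delta}(i-1)\ge d$, a partway meeting costing $2(d-\overrightarrow{\Delta}(i-1))$, and a round trip delivering the deficit $|\overrightarrow{\Delta}(i-1)|$ plus $2d$) yielding identical formulas. The only difference is that you spell out the optimality direction (the conservation identity and the exchange arguments) that the paper's proof leaves implicit, so your version is, if anything, more complete.
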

 \begin{proof} 
Asume the algorithm computed correctly $\overrightarrow{\Delta}(i-1]$. There are three
cases:
\begin{description}
\item {{\bf Case 1} (statement 4).} 

The instance ${\mathcal{D}}_{i-1}$ is solvable and after moving 
the packet from $a_1$ to $a_{i-1}$ the maximal remaining energy is $\overrightarrow{\Delta}(i-1]$.
In this case this energy is sufficient to move the packet from $a_{i-1}$ to $a_i$, together
with remaining energy. We spent $d$ energy to travers the distance $d$ in one direction.
Then we get in total ${\mathcal{D}}_{i-1}+{\mathcal{D}}_{i}-d$ energy in $a_i$.
\vskip 0.2cm
\item {{\bf Case 2} (statement 5).

} The instance ${\mathcal{D}}_{i-1}$ is still solvable but after moving
the packet from $a_1$ to $a_{i-1}$ the remaining energy is not sufficient to reach $a_i$ without
{\em help} from agents to the right of $a_{i-1}$. Then the $(i-1)$-st agent moves only
one-way by distance $\overrightarrow{\Delta}(i-1]$. The remaining distance  (to $a_i$) should be covered both-ways
from $a_i$. Hence $a_i$ looses $2(d-\overrightarrow{\Delta}(i-1])$ energy, which is expressed by statement 5.
\vskip 0.2cm
\item {{\bf Case 3} (statement 6).

} The instance ${\mathcal{D}}_{i-1}$ is not solvable,
now the node $a_{i-1}$ needs $|\overrightarrow{\Delta}(i-1]|$ additional energy which has to be delivered 
from $a_i$. Additionally the $i$-th agent needs $2d$ energy to cover the distance $d$ from $a_i$ to
$a_{i-1}$ in both directions. Consequently its energy is reduced by $2d+|\overrightarrow{\Delta}(i-1]|$.
This is reflected in the statement 6. 
\end{description}
\noindent The cases correspond to the statements in the algorithm, and
show their correctness. This completes the proof.
\end{proof}

\noindent 
Once the values of $\overrightarrow{\Delta}(i)$ are computed, the schedule describing the behavior of each agent is implicitly obvious, but we give it  above for reference. Note that the action of each agent $a$ is started once the process involving lower-numbered agents has been completed. We are not interested in this paper in finding the time to complete the schedule allowing agents to work in parallel.
\vskip 0.1cm
{\bf Observation.} \noindent In the scheduling algorithm if an agent has the packet (in particular if it is the 
first agent) then it is not going left, since it holds the packet already.
  \begin{small}
    \begin{center}
    \fbox{\vspace*{0.2cm}
    \begin{minipage}{11.6cm}
    \vspace*{0.2cm} \noindent
     \hspace*{0.2cm} {\bf  ALGORITHM} {\textsc{Delivery-Schedule-on-the-Line}} ;
\vskip .3cm
\hspace*{0.3cm} \{\, Delivering packet from $a_1$ to $a_n$\,\}
\vskip 0.3cm
\hspace*{0.3 cm} $pos:=a_1$;\vskip .2cm
 \hspace*{0.3 cm}  {\bf for} $i=1$ to $n$  {\bf do}\vskip .3cm
 \hspace*{0.7cm} {\bf  if} $\overrightarrow{\Delta}(a) \geq 0$ and $pos\le  a_i$  {\bf  then}\vskip .3cm
 \hspace*{1.2 cm} 1. The $i$-th agent walks left swallowing energy of  encountered \\
\hspace*{2 cm} 
agents until arriving
at the packet position. It picks up the packet.\vskip .3cm
 \hspace*{1.2 cm} 2. The $i$-th agent walks right swallowing energy of encountered \\
\hspace*{2 cm}agents  until  exhausting his energy or reaching $a_n$. \vskip .3cm
 \hspace*{1.2 cm}
3. Leave the packet at  th actual position $pos$ of the $i$-th agent. 
%\vskip .3cm
%5. \hspace*{0.7cm}   {\bf  else}  WAIT until a agent arrives and transfer your energy to it.
\vskip 0.3cm
\hspace*{0.3 cm} 
 Delivery is successful iff $pos=a_n$; \vspace*{0.2cm}
  \end{minipage}
  }
  \end{center}
  \end{small}
  \vskip 0.2cm \noindent
\vskip 0.2cm
     \begin{figure}[htb]
\mbox{ \ }
\centering
\includegraphics[width=12.0cm]{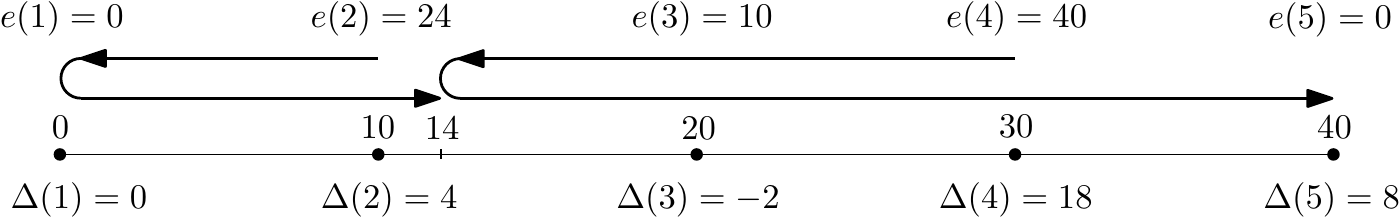}
\caption{Schedule of agent movements for $a_i$'s and energies given in Example 1.
% \vspace{-0.4cm}
} \label{przyklad}
 \vspace{-0.4cm}
\end{figure}
\vskip 0.3cm
%---------------------------------
\noindent We conclude with the following theorem.
\begin{theorem}
In linear time we can decide if the information of any agent can be delivered to any other agent
and if it is possible find the centrelized scheduling algorithm
which perorms such a delivery.
\end{theorem}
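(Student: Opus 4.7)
The plan is to assemble the theorem from three linear-time ingredients.

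First, I would reduce an arbitrary delivery instance $(s,t)$ to the canonical case $s=a_1$, $t=a_n$ by the preprocessing sketched at the start of Section~\ref{sect:data-line}: every agent to the left of $s$ walks right and pools its surplus energy at the source, and symmetrically every agent to the right of $t$ walks left and pools its surplus energy at $t$. A single left-to-right sweep followed by a right-to-left sweep realises this in $O(n)$ time and reduces the input to one in which the source and target are the extremal agent positions.

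Next, I would apply algorithm \DL{} to the reduced instance to compute the table $\overrightarrow{\Delta}$. By Lemma~\ref{lm:ass} this runs in linear time, and the decision question ``can the packet be delivered from $a_1$ to $a_n$?'' is answered affirmatively iff $\overrightarrow{\Delta}(n)\ge 0$; this already handles the decision half of the theorem.

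Finally, whenever $\overrightarrow{\Delta}(n)\ge 0$ I would invoke \textsc{Delivery-Schedule-on-the-Line}. Its correctness follows by an induction paralleling the three-case analysis in the proof of Lemma~\ref{lm:ass}: after iteration $i$, the packet sits at some position $pos\le a_i$, and the energy stored at $pos$ together with the residual energies of agents $i+1,\dots,n$ realises exactly the value $\overrightarrow{\Delta}(i)$, which is by induction the maximal energy marshallable at $a_i$ after a partial delivery. Each iteration performs only a left-then-right walk of the current agent, so the whole schedule is emitted in $O(n)$ time, and combining the three ingredients gives the claimed linear-time algorithm. The step I expect to require most care is the physical realisability of the emitted schedule -- in particular the claim that no agent ever carries the packet backwards (away from $t$); this is precisely the content of the Observation preceding the scheduling algorithm, and follows from the loop structure, in which the $i$-th agent only ever moves leftwards before picking up the packet and, once holding it, moves only rightwards until its energy is exhausted or the position $a_n$ is reached.
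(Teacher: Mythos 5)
Your proposal is correct and follows essentially the same route as the paper, which offers no separate proof but derives the theorem from exactly the ingredients you assemble: the reduction to $s=a_1$, $t=a_n$, the linear-time computation of $\overrightarrow{\Delta}$ via \DL{} (Lemma~\ref{lm:ass}), and the schedule \textsc{Delivery-Schedule-on-the-Line} together with the observation that a packet-holding agent never moves left. Your added induction justifying the schedule's realisability is a reasonable filling-in of details the paper leaves implicit.
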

%\begin{proof}
%??????????? 
%\end{proof}

In the next section we show that the solution may be extended to the convergecast problem.
%\vspace*{-0.3cm}

%\subsection{Convergecast on the line}

%We go now to convergecast problem. 
%We make the following observation, which applies to convergecast and broadcast.\
%\vspace*{-0.4cm}
\noindent
\begin{comment}\begin{observation}
 {\sl Every optimal schedule may be converted to an optimal schedule in which the first moves are made by the extremal agents  $a_1$ and $a_n$. Each such agent walks towards its neighbor ($a_2$ or $a_n$)  either until exhausting its entire energy or until arriving to the neighbor position and transferring to it the remaining energy. }\
\end{observation}

\vspace*{-0.4cm}

Indeed, in any optimal solution the segment $[a_1,a_n]$ must be traversed entirely once in order to perform the desired communication and parts of it must be traversed the second time, in order to perform required energy transfers. The moves of extremal agents mentioned in the observation use their energy to maximize the portions of segments $[a_1,a_2]$ and $[a_n,a_n]$, which are to be
traversed once only.

\noindent {\bf (???)}
Consequently, we may assume that after such moves the agents $a_1$ and $a_n$ become stationary and $\en(1)=\en(n)=0$. Therefore we have $a_1\leq a_2<a_3< \ldots <a_n \leq a_n$. This is exactly the initial configuration as for the data delivery problem with two virtual agents $a_1$ and $a_{n+1}$. For simplicity, we will not renumerate the agents, and we assume in the sequel w.l.o.g. that $\en(1)=\en(n)=0$.
\end{comment}

\subsection{The convergecast on the line}

The convergecast consists in   communication in which the union of initial information of all agents is in the hands of the same agent. Sometimes the problem consists in verifying whether a particular agent is a convergecast agent and other times it has to be determined whether any such agent exists. For energy exchanging agents, if convergecast is possible any agent may be its target, as agents may swap freely when meeting.

We say that a schedule results in convergecast iff there exists an agent $c$, for $1 \leq c \leq n$ (called a {\em convergecast agent}), and a schedule permitting to collect by agent $c$ the initial information of any other agent.

\begin{comment}
We present below the algorithm finding if convergecast is possible. 
We will use algorithm \DL~to compute the values of $\overrightarrow{\Delta}(i)$ 
as defined before assuming that $s=a_1$ and $t=a_n$. Similarly we denote by $\RDC[i]$ 
the values of the energy potential at point $a_i$ that the symmetric algorithm 
would compute while transferring the packet initially situated at the point $a_n$ 
towards the target position at $a_1$.
\end{comment}

  \begin{small}
    \begin{center}
    \fbox{\vspace*{0.2cm}
    \begin{minipage}{11.2cm}
    \vspace*{0.3cm} \noindent
     \hspace*{0.2cm} {\bf ALGORITHM} \CL;
     \vskip .2cm
1.  \hspace*{0.3 cm}  Compute the values of $\overrightarrow{\Delta}(i)$ and 
$\overleftarrow{\Delta}(i)$ representing the energy \\
%\vskip .3cm
 \hspace*{0.65 cm}   potentials at $a_i$ for deliveries from $a_1$ to $a_n$ and $a_n$ to $a_1$, respectively\vskip .3cm
2. \hspace*{0.3 cm}  {\bf for} $i=1$ to $n$  {\bf do}\vskip .3cm
3. \hspace*{0.8cm} $d:=a_{i+1}-a_i;$ \vskip 0.3cm
4. \hspace*{0.8cm} {\bf  if} $(\overrightarrow{\Delta}(i) \geq 0) \wedge
     ( \RDC[i+1] \geq 0)$  {\bf  then}
     $E:=\overrightarrow{\Delta}(i) + \overleftarrow{\Delta}(i) - d;$  \vskip 0.3cm
5. \hspace*{1.32cm} $(\overrightarrow{\Delta}(i) \geq 0) \wedge
    ( \overleftarrow{\Delta}(i) < 0)$  {\bf  then}
    $E:=\overrightarrow{\Delta}(i) + \overleftarrow{\Delta}(i)/2  - d;$  \vskip 0.3cm
6. \hspace*{1.32cm} $(\overrightarrow{\Delta}(i) < 0)  \wedge
    ( \RDC[i+1] \geq 0)$ {\bf  then}
    $E:=\overrightarrow{\Delta}(i)/2 + \overleftarrow{\Delta}(i) - d;$  \vskip 0.3cm
7. \hspace*{0.8cm}    {\bf  if} $E \geq 0$ {\bf  then} {\bf return} Convergcast possible;  \vskip 0.3cm
8. \hspace*{0.3 cm}  {\bf return} Convergcast not possible;\vskip 0.3cm

 \vspace*{0.3cm}
  \end{minipage}
  }
  \end{center}
  \end{small}
  \vskip 0.3cm

We have the following theorem.
\begin{theorem}
\label{CL}
Algorithm \CL~in $O(n)$ time solves the convergecast problem.
 \vspace{-0.2cm}

\end{theorem}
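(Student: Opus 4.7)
My plan has three layers: a structural reduction to a split problem, a case analysis at each candidate split, and an immediate complexity bound.

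First I would prove a structural lemma asserting that a convergecast schedule succeeds if and only if there is an index $i\in\{1,\ldots,n-1\}$ such that the agents $\{1,\ldots,i\}$ and $\{i+1,\ldots,n\}$ can pool their information near the endpoints $a_i$ and $a_{i+1}$ respectively, and then meet inside the gap $[a_i,a_{i+1}]$. The argument is that in any optimal schedule one may permute short pieces of the agents' walks so that every packet crosses the gap boundary at the same split index, and this reshuffling does not increase the total energy consumption. The reduction leaves $n-1$ candidate splits, which is exactly what the main loop of \CL~examines.

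Second, for a fixed split $i$ feasibility is expressed through the precomputed potentials $\overrightarrow{\Delta}(i)$ and $\overleftarrow{\Delta}(i+1)$, whose meaning is fixed by Lemma~\ref{lm:ass}: a non-negative value is the residual energy available at the corresponding endpoint after the one-sided pooling, while a negative value is the negation of the minimum energy that must be supplied at that endpoint to make the pooling succeed. The three algorithmic cases correspond to the sign combinations. Case~4 (both non-negative) is straightforward: the two carriers have to traverse the gap between them only once, so $\overrightarrow{\Delta}(i)+\overleftarrow{\Delta}(i+1)\ge d$ is necessary and sufficient. When both potentials are negative, neither side can even produce a carrier, and the combined energy is readily seen to be insufficient to cross the gap; the algorithm correctly skips such $i$.

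The core difficulty, and the place where I expect the argument to be most delicate, is the factor $\tfrac12$ in Cases~5 and~6. Consider Case~5: the right-side deficit $\delta=-\overleftarrow{\Delta}(i+1)$ is accumulated in the backward execution of Algorithm~\DL~through statements~5 and~6, each of which charges a both-way traversal whenever an agent cannot reach its neighbour on its own. When an outside source of energy (the left excess) is available at $a_i$, the chain of such both-way walks can be re-scheduled as one-way walks with a helper agent crossing the gap from the left, and the effective cost of offsetting the deficit is halved. I plan to make this rigorous by induction on the suffix of right-side agents involved in the deficit cascade, pairing each extra both-way traversal in the backward algorithm with a single one-way walk performed by the helper. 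Case~6 follows by a symmetric argument.

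The complexity claim is immediate: two applications of Algorithm~\DL~compute $\overrightarrow{\Delta}$ and $\overleftarrow{\Delta}$ in linear time by Lemma~\ref{lm:ass}, and the main loop of \CL~performs constant work per index.
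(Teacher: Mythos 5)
Your plan follows the paper's own argument: compute $\overrightarrow{\Delta}$ and $\overleftarrow{\Delta}$ via Lemma~\ref{lm:ass}, reduce convergecast to the existence of a gap at which carriers of the two extremal packets can meet, and analyze the same three sign cases, justifying the factor $\tfrac12$ exactly as the paper does (a deficit $D$ on one side corresponds to a round trip of length $D/2$ each way, of which the helper arriving from the other side need only perform the one-way half). The only caveat is that in the mixed-sign cases the meeting point lies beyond $a_{i+1}$ (resp.\ before $a_i$) rather than inside $[a_i,a_{i+1}]$ as your structural lemma literally states, but this is a phrasing issue that does not change the substance of the argument.
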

 \vspace{-0.2cm}

\subsection{The broadcast on the line}
In the broadcast communication one agent has to transfer its original information to all other agents of the collection. In this section we present an algorithm determining which agents are able to broadcast. Contrary to the convergecast problem, for the energy transferring agents, only a selected subset of them may be able to perform broadcast. However the approach used for convergecast may be transformed the way it is also useful for broadcast communication. We design the schedule, which not only performs broadcast, whenever possible, but also tries to use as little energy of the broadcasting agent as possible. We call such schedule {\em optimal}.

Our algorithm will compute for all agents the values of $\RDC[i]$ and $\RDB[i]$. The value of $\RDC[i]$ equals the potential of energy at point $a_i$ for the delivery of initial information of agent $i$ to agent $1$ using only agents $1, 2, \dots, i$. More exactly, if $\RDC[i] < 0$, then $-\Delta(i) $ equals the minimal amount energy which must be added to $e(i)$ so that the delivery from $a_i$ to $a_1$ is possible.
% If $\RDC[i] \geq 0$, then $\Delta(i) $ equals the maximal amount energy which 
may be suppressed from $\en(i)$ so that the delivery from $a_i$ to $a_1$ be possible.

The values of $\RDB[i]$ are defined symmetrically.
% As the agent must travel in both directions, this means that $a_i + \RDC[i]/2$ is the furthest point from which the agents $1,2, \dots, i$ can pick up the information broadcast in the left direction so that the information reaches agent $1$.
%If $\overrightarrow{\gamma}(i) \geq \en(i)$ then this excess energy may be suppressed from the consecutive agents $i_1, i_2, \dots$ until exhausted. Similarly, $\RDB[i]$ is the potential of energy at $a_i$, for the delivery from $a_i$ to $a_n$ using agents $i, i+1, \dots, n$.

The following algorithm may be viewed as a different version solving the delivery problem. However the main reason of its presentation is to compute $\RDC[i]$ values, which are used for broadcast schedule.

    \begin{small}
    \begin{center}
    \fbox{\vspace*{0.2cm}
    \begin{minipage}{10.8cm}
    \vspace*{0.3cm} \noindent
     \hspace*{0.2cm} {\bf ALGORITHM} \DB;
     \vskip .2cm

1.  \hspace*{0.3 cm}  $e:=0; s:=a_1;$\vskip .3cm
2. \hspace*{0.3 cm}  {\bf for} $i=1$ to $n+1$  {\bf do}\vskip .3cm
3. \hspace*{0.8cm} {\bf  if} $a_i > \en(i)+e + s$  {\bf  then} $e := e+\en(i)$; $\RDC[i] :=s+e- a_i$;\vskip 0.3cm
4. \hspace*{0.8cm}   {\bf  else}\\
\hspace*{2cm}  $s\; +=\; {(e+\en(i)+ \max(a_i-s,0))}/{2}$;\\
\hspace*{1.8cm} \; $e:=0;\; \; \RDC[i]:=2(s-a_i)$;\vskip 0.3cm 
5. \hspace*{0.2cm} {\bf return } {\em true} iff $(s \geq a_n)$ ; \vskip .3cm

  \end{minipage}
  }
  \end{center}
  \end{small}
  \vskip 0.2cm

   \begin{lemma}
  \label{lm:BP}
\mbox{ \ }\\
{\bf (a)} If $\RDC[i] \geq 0$ in algorithm \DB~then agents $1,2,\ldots, i$ can deliver the packet from agent $i$ at initial position $a_i$ to all agents $1,2,\ldots, i-1$. Moreover, $\RDC[i]$ equals the maximal amount of energy which may be suppressed from $\en(i)$ (and $\en(i-1), \en(i-2), \ldots , \en(1)$, if possible, in this order) so that this delivery is still possible.
\vskip 0.2cm \noindent
{\bf (b)} The tables $\RDC[i],\,  \RDB[i]$ can be computed in linear time.
  \end{lemma}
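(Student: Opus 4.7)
My plan is to prove part~(a) by induction on $i$, showing that after the $i$-th iteration of Algorithm~\DB\ the computed value $\RDC[i]$ correctly equals the greedy slack (respectively, the negative of the minimal energy deficit) for backward delivery of the packet from $a_i$ to $a_1$ using only agents $1,\ldots,i$, and that the state pair $(s,e)$ carries enough aggregate information about the coalition to ensure that the formula used in the next iteration produces the correct greedy slack for agent~$i{+}1$. The base case $i=0$, with $(s,e)=(a_1,0)$, is immediate: with no coalition, only the degenerate delivery already at $a_1$ is possible.

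The inductive step splits along the two algorithmic branches. In \emph{Case~3} ($a_i > e(i)+e+s$, so agent~$i$ lies beyond the current reach $s+e$), agent~$i$'s energy cannot be used to consolidate the front, so the algorithm banks it into the reserve via $e := e+e(i)$; the formula $\RDC[i]=s+e-a_i$ (with the updated $e$) records the exact deficit $a_i-(s+e_{\mathrm{old}}+e(i))$ that agent~$i$ would need to close the gap to the coalition. In \emph{Case~4} ($a_i \le e(i)+e+s$), I would argue by an exchange principle that the optimal greedy strategy has agent~$i$ and the consolidated front meet at the point $s_{\mathrm{new}} = s + (e+e(i)+\max(a_i-s,0))/2$ matching the algorithm's update, and that $\RDC[i]=2(s_{\mathrm{new}}-a_i)$ then gives the greedy maximum that can be suppressed from $e(i)$ (and, once exhausted, from $e(i{-}1),\ldots,e(1)$ in that order). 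The factor of two in $\RDC[i]$ reflects that each unit of deviation from the midpoint meeting wastes one unit of energy on each side of the meeting, so the midpoint meeting is provably optimal.

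The principal obstacle will be the exchange argument of Case~4 when the reserve $e>0$, because then part of the coalition's energy is physically stranded at positions strictly to the right of~$s$. I plan to handle this via a two-step reduction: first show that the reserve can be virtually absorbed into the new front in a preparatory step without decreasing the achievable slack, and then run the midpoint exchange argument for a single front against a single incoming agent~$i$. Part~(b) is then immediate: Algorithm~\DB\ performs only constant work in each of its $n+1$ iterations, giving $O(n)$ total time, and the mirror computation producing $\RDB[i]$ is an $O(n)$ right-to-left sweep by the same reasoning.
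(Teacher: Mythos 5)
Your proposal follows essentially the same route as the paper's own proof: induction over the iterations of Algorithm \DB, maintaining the invariant that $s$ is the rightmost point from which the already-consolidated prefix of agents can deliver a packet to $a_1$ while $e$ is the banked energy of the agents lying beyond that reach, with the case split matching the two branches of line 3 and the factor of~$2$ accounted for by the round-trip/surplus argument. The one step you single out as the ``principal obstacle'' --- consolidating a reserve that is physically scattered strictly to the right of $s$ --- is precisely the step the paper's proof asserts without justification, so your plan is, if anything, slightly more explicit about where the real work lies.
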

  \begin{proof}% {\bf (Lemma~\ref{lm:BP})}
  We prove by induction that the following assertion is true at the completion of $i$-th iteration of the "for" loop of the algorithm:\
 { \sl
 Let $k$ be the largest integer $k \leq i$, such that agents $1, 2, 3 \dots, k$ can deliver the packet from point $s$, such that $s \geq a_k$, to point $a_1$ and $s$ is the largest possible point having this property. Then
  \vspace*{0cm}
  \begin{enumerate}
\item either $k=i$, $s \geq a_i$, $e=0$ and $\RDC[i] = 2(s-a_i) \geq 0$,
\item or $k<i$, $s < a_i$ and $\RDC[i]=s+e-a_i <0$, where $e=\sum_{i=k+1}^{i-1} \en(i)$
\end{enumerate}
}
The assertion is clearly true for $i=1$ as $a_1=s, e=0$ and $\RDC[1]=0$.
Suppose the assertion was true after iteration $i-1$ and all previous iterations.

Consider first the case 1, i.e. that $k=i-1$ agents could deliver to $a_1$ the packet from point $s \geq a_{i-1}$ and at the completion of the $(i-1)$-th iteration $e=0$ and $DC[i-1]s-a_{i-1})$. Suppose first that at the $i$-th iteration the condition at line 3 is true, i.e.  $a_{i} > \en(i)+e + s$. That means that agent $1$  cannot deliver the packet from point $a_{i}$ to $s$ and $\RDC[i] =s+e-a_{i} <0$, so the case 2 of the assertion is true. Suppose now that $a_{i} \leq \en(i)+e + s$ at line 3 of the $i$-th iteration. The agents $1,2, \ldots, k$ could deliver the packet from $s \geq k$ to $a_1$. Observe that $e$ contains the accumulated energy of agents ${k+1}, {k+2}, \ldots, {i-1}$, i.e.
$$
e=\sum_{i=k+1}^{i-1} \en(i)
$$
Therefore the condition  $a_{i} \leq \en(i)+e + s$ implies that agents $k+1, \ldots,i$  can deliver the packet from point $a_{i}$ to $s$. Hence agents $1,2, \ldots, i$ can deliver from  $a_{i}$ to $a_1$. Since $e=0$ and $\RDC[i] = 2(s-a_{i}) \geq 0$, the case 1 of the assertion becomes true. Observe that the new point $s$, computed at line 4 is at distance equal twice the excess of energy obtained at $a_i$ so it is computed correctly.

Take now the case that $i-1$ agents could not deliver to $a_1$ the packet from point $a_{i-1}$ and at the completion of the $(i-1)$-th iteration we have $s < a_i$ and $\RDC[i] =s+e-a_i <0$. Suppose first that at the $i$-th iteration the condition at line 3 is true, i.e.  $a_{i} > \en(i)+e + s$. Then, it means that the packet cannot be delivered from $a_i$ to $s$ using agents ${k+1}, {k+2}, \ldots, {i-1}$ and in line 3 we have $e=\sum_{i=k+1}^{i} \en(i)$, so the case 2 of the assertion is true. Suppose now that $a_{i} \leq \en(i)+e + s$ at line 3 of the $i$-th iteration. Similarly to the case above, agents $1,2, \ldots, k$ could deliver the packet from $s \geq a_k$ to $a_1$ and agents  $k+1, \ldots,i$  can deliver the packet from point $a_{i}$ to $s$ and the case 1 of the assertion becomes true. This completes the proof.

  \end{proof}

  \begin{small}
    \begin{center}
    \fbox{\vspace*{0.2cm}
    \begin{minipage}{10.8cm}
    \vspace*{0.3cm} \noindent
     \hspace*{0.2cm} {\bf ALGORITHM} \BL;
     \vskip .2cm

     1.  \hspace*{0.3 cm}  $BR:= \emptyset;$\vskip 0.3cm
     2. \hspace*{0.3 cm} Compute the values of $\RDC[i]$ and $\RDB[i]$ representing the energy \\
     %\vskip .3cm \vskip 0.3cm
 \hspace*{0.65 cm}   potential for deliveries from $a_1$ to $a_n$ and $a_n$ to $a_1$, respectively\vskip .3cm
3. \hspace*{0.3 cm}  {\bf for} $i=1$ to $n$  {\bf do}\vskip .3cm
4. \hspace*{0.8cm} {\bf  if} $\RDC[i]+\RDB[i+1] - 2(a_{i+1}-a_i) \geq 0$  {\bf  then}   \vskip 0.3cm
5. \hspace*{1.4cm} $BR :=BR \cup \{i,i+1\}$;\vskip 0.3cm
6. \hspace*{0.3 cm}  Report $BR$ as set of broadcast agents;\vskip 0.3cm

 \vspace*{0.2cm}
  \end{minipage}
  }
  \end{center}
  \end{small}
  \vskip 0.2cm

\begin{theorem}
\label{thm:broadcast}
 \vspace{-0.1cm}
Algorithm \BL~identifies  all agents of the collection, which are able to broadcast.
\end{theorem}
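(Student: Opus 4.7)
\noindent The plan is to establish the theorem in the two standard directions: \emph{soundness}, that every agent added to $BR$ by Algorithm \BL~can indeed broadcast, and \emph{completeness}, that every broadcast-capable agent is placed in $BR$.

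For soundness, fix an index $i$ at which the test at line~4 succeeds, so that $\RDC[i] + \RDB[i+1] \geq 2(a_{i+1} - a_i)$, and argue that agent $i$ can broadcast; the argument for $i+1$ is symmetric. I would exhibit a three-phase schedule. In phase~1, agent $i+1$ makes a round trip between $a_{i+1}$ and $a_i$, collecting the packet and rebalancing energy between the two sides; this costs $2(a_{i+1} - a_i)$, which is covered by the combined slack $\RDC[i] + \RDB[i+1]$. In phase~2, agents $1, \ldots, i$ execute the leftward delivery from $a_i$ toward $a_1$ guaranteed by Lemma~\ref{lm:BP}, so that every agent in $\{1, \ldots, i-1\}$ learns the packet. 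In phase~3, agents $i+1, \ldots, n$ execute the symmetric rightward delivery from $a_{i+1}$ toward $a_n$. The gap inequality ensures that both sides receive enough energy, so every other agent learns $i$'s information.

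For completeness, fix an agent $j$ that is broadcast-capable together with a successful schedule; the aim is to identify a gap adjacent to $j$---either $(a_{j-1}, a_j)$ or $(a_j, a_{j+1})$---at which the algorithm's inequality holds. I would decompose the schedule at the chosen gap into work performed on the left by agents $1, \ldots, i$, work performed on the right by agents $i+1, \ldots, n$, and a bridging movement across the gap. By the optimality of $\RDC[i]$ and $\RDB[i+1]$ asserted in Lemma~\ref{lm:BP}, the net energies available on the two sides cannot exceed $\RDC[i]$ and $\RDB[i+1]$, while the bridging must account for at least $2(a_{i+1} - a_i)$ of energy in order to couple the two sides. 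Rearranging yields the algorithm's inequality.

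The main technical obstacle lies in the completeness direction, and specifically in justifying the factor of $2$ in $2(a_{i+1} - a_i)$. The round-trip cost is natural in the soundness construction, but on the necessity side one must rule out ostensibly cheaper strategies in which an agent crosses the gap one-way and remains on the far side. My planned remedy is an exchange argument showing that any such one-way crossing eliminates the migrating agent from its original side, and that the energy saved on the crossing is exactly offset by the lost contribution of that agent, so that the factor of $2$ remains tight. The boundary cases $j = 1$ and $j = n$, in which only one adjacent gap exists, reduce to the same argument applied to that unique gap.
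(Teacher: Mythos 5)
Your proposal follows essentially the same route as the paper's proof: both decompose the instance at the gap between consecutive agents $i$ and $i+1$, invoke the extremal delivery potentials $\RDC[i]$ and $\RDB[i+1]$ of Lemma~\ref{lm:BP}, and obtain the factor $2$ from the round trip needed to fetch the packet across the gap --- the paper states this as the rendezvous condition $a_i+\RDC[i]/2 \geq a_{i+1}-\RDB[i+1]/2$, which is exactly your inequality, and your completeness sketch merely unpacks the paper's appeal to the ``rightmost''/``leftmost'' pickup points. One small repair to your phase~1: when $\RDB[i+1] < a_{i+1}-a_i$ agent $i+1$ cannot reach $a_i$ unaided, so the schedule should be a two-sided rendezvous at an intermediate point $p$ of the gap (the left group reaching right to $p$ using slack $\RDC[i]$, the right group reaching left to $p$ using slack $\RDB[i+1]$) rather than a full round trip by agent $i+1$ alone; the energy accounting is unchanged.
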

  \begin{proof}% {\bf (Theorem~\ref{thm:broadcast})}
  Note that $x=a_i + \RDC[i]/2$ is the rightmost point from which the agents $1,2, \dots, i$ can pick up the packet and broadcast in the left direction so that the packet reaches agent $1$. Also $y=a_{i+1} -\RDB[i+1]/2$ is the leftmost point from which the agents $i+1, \dots, n$ can pick up the packet and broadcast in the right direction so that the packet reaches agent $n$. Agents $i$ and $i+1$ can communicate during this broadcast off $x \geq y$, i.e. when  $\RDC[i]+\RDB[i+1] \geq 2(a_{i+1}-a_i)$ which is verified at line 4 of the algorithm. All such pairs of agents are included in the broadcasting set $BR$.
  \end{proof}

Observe that, the positive amounts of $E$ in \CL~and $\RDC[i]+\RDB[i]-\en(i) \geq 0$ in \BL~equal the maximal amount of energy which may be suppressed from agent $i$ and the corresponding communication is still possible.

  \section{The tree environment}\label{sect:tree}

We suppose that the agents are placed at the nodes of the undirected  tree. Observe that for each problem: the data delivery, convergecast and broadcast the tree may be truncated so that each leaf contains an initial position of an agent. Indeed, in neither problem it makes sense to visit subtrees containing no energy source.

We developed basic ideas and tables for the line, now they can be extended
to trees. The tables $\Delta$ for lines were computed locally, looking only
at neighboring agents. Simiarly for trees, the value of the corresponding tables
for a  node in a tree is computed looking at neighbors of this node.

\subsection{Data delivery in the tree}

The delivery problem for a tree is easily reducible to the case of a line.

\begin{theorem}
We can solve delivery problem and construct delivery-scenario on the tree in linear time.
\end{theorem}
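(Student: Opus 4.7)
The plan is to reduce the tree delivery problem to the line delivery problem already solved by Lemma~\ref{lm:ass}, by first ``pumping'' the energy of off-path agents onto the unique $s$-to-$t$ path in the tree.

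First, I would identify the unique path $P = v_0 v_1 \ldots v_k$ with $v_0=s$ and $v_k=t$; this can be found in linear time by a single DFS from $s$. Removing the edges of $P$ decomposes the tree into $P$ together with a disjoint collection of off-path subtrees, each attached to some $v_i$. For every such subtree $T$ hanging off $v_i$, the packet will never enter $T$ (the packet's trajectory in a tree is forced to follow $P$), so the only contribution $T$ can make is to ship as much energy as possible to $v_i$. I would compute this by a bottom-up dynamic program: for a node $u$ in $T$ define $E(u)$ as the maximum energy that can be accumulated at $u$ by the agents of the subtree rooted at $u$, with recursion
\[
E(u) \;=\; \en(u) \,+\, \sum_{c \in \mathrm{children}(u)} \max\!\bigl(0,\; E(c) - w(u,c)\bigr),
\]
where $w(u,c)$ is the weight of the edge $(u,c)$. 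The $\max$ with $0$ accounts for the fact that a subtree whose total energy cannot pay for the trip to $u$ contributes nothing. I then set $\en'(v_i) := \en(v_i) + \sum_{T \text{ attached to } v_i} E(\text{root of }T)$ for every vertex on $P$, and finally invoke algorithm \DL~on the line $P$ with energies $\en'$.

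The correctness argument has two parts. First, any valid schedule can be rearranged into one in which, for each off-path subtree, every agent (in order of distance from $v_i$, innermost first) walks monotonically toward $v_i$ and merges its residue with $v_i$'s budget; this rearrangement never worsens what is deliverable on $P$ because the packet itself never leaves $P$ and energy transfer on $P$ is handled optimally by the line algorithm. Second, the recurrence for $E(u)$ is exactly the statement that the subtree's energy budget at its root is maximized by greedily pushing leaves into their parents and discarding subtrees that cannot pay their ``entrance fee'' $w(u,c)$; this is essentially Lemma~\ref{lm:ass} restricted to a subtree whose packet trajectory collapses to a single target point. Putting the two together, delivery on the tree is feasible iff $\overrightarrow{\Delta}(k) \geq 0$ in the line computation on $P$ with energies $\en'$.

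For the schedule itself, each subtree's DP produces, by the same bookkeeping, an explicit concatenation of monotone walks toward $v_i$; followed by the schedule \textsc{Delivery-Schedule-on-the-Line} applied to $P$. The running time is linear: the DFS to find $P$ is $O(n)$; the off-path DPs together visit each tree vertex exactly once, so they take $O(n)$ total; and the final line-algorithm call is $O(k) = O(n)$. The main obstacle I expect is the optimality argument for the greedy subtree collection, i.e.\ justifying that no clever scheme using off-path agents as couriers along $P$ beats depositing energy at the attachment point; this reduces to observing that any energy sent past $v_i$ along $P$ can equivalently be deposited at $v_i$ and then forwarded by the line algorithm, which is already optimal on $P$ by Lemma~\ref{lm:ass}.
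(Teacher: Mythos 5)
Your proposal is correct and follows essentially the same route as the paper: find the unique $s$--$t$ path, note the packet never profitably leaves it, collapse each off-path subtree by pushing its energy bottom-up to its anchor vertex on the path, and then invoke the line algorithm. Your explicit recurrence with the $\max(0,\cdot)$ cutoff and the exchange argument merely spell out details that the paper's one-paragraph sketch leaves implicit.
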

\begin{proof}
We find the path $\pi$ connecting $s$ with $t$ in a given undirected tree $T$.
Suppose we remove edges of the path. The tree splits into several subtrees {\em anchored}
at vertices of $\pi$. For each such subtree we direct all edges towards the root, which is
a vertex of $\pi$. The leaves in these trees are sending their energies towards their
roots accumulating energies of intermediate nodes.
This way we reduce the delivery on the tree to the delivery on the line $\pi$.
This completes the proof.
\end{proof}

\subsection{Convergecast in the tree}

We can reduce the convergecast and broadcast problems on general
trees to trees with degree at most 3.
If a vertex $v$ has neighbors $v_1,..,v_k$ we can
change it locally to small binary internal subtree with leaves $v_1,..,v_k$ by
adding several edges with zero cost.
Hence we assume from now that in our undirected tree each vertex has at most 3 neighbors.

Though the input tree is undirected, we can consider direction of edges.
For each undirected edge $(u,v)$ we consider two directed edges $u\rightarrow v,\; v\rightarrow u.$

\noindent Define the subtree $T(u,v)$ as the connected component containing  $u$
and  resulting from $T$ by removing the edge $(u,v)$, see Figure~\ref{fig1}.

For each directed edge $u\rightarrow v$ of the tree we define $\mathrm{\Delta}(u,v)$
as the cost of moving
all packets from the subtree $T(u,v)$ to its root $u$ without interacting with any node
outside $T(u,v)$. The table $\mathrm{\Delta}$ is a generalization of the table $\Delta$ to trees.

 \begin{figure}[h]
\mbox{ \ }
\centering
\vspace*{-0.2cm}
\includegraphics[width=7.5cm]{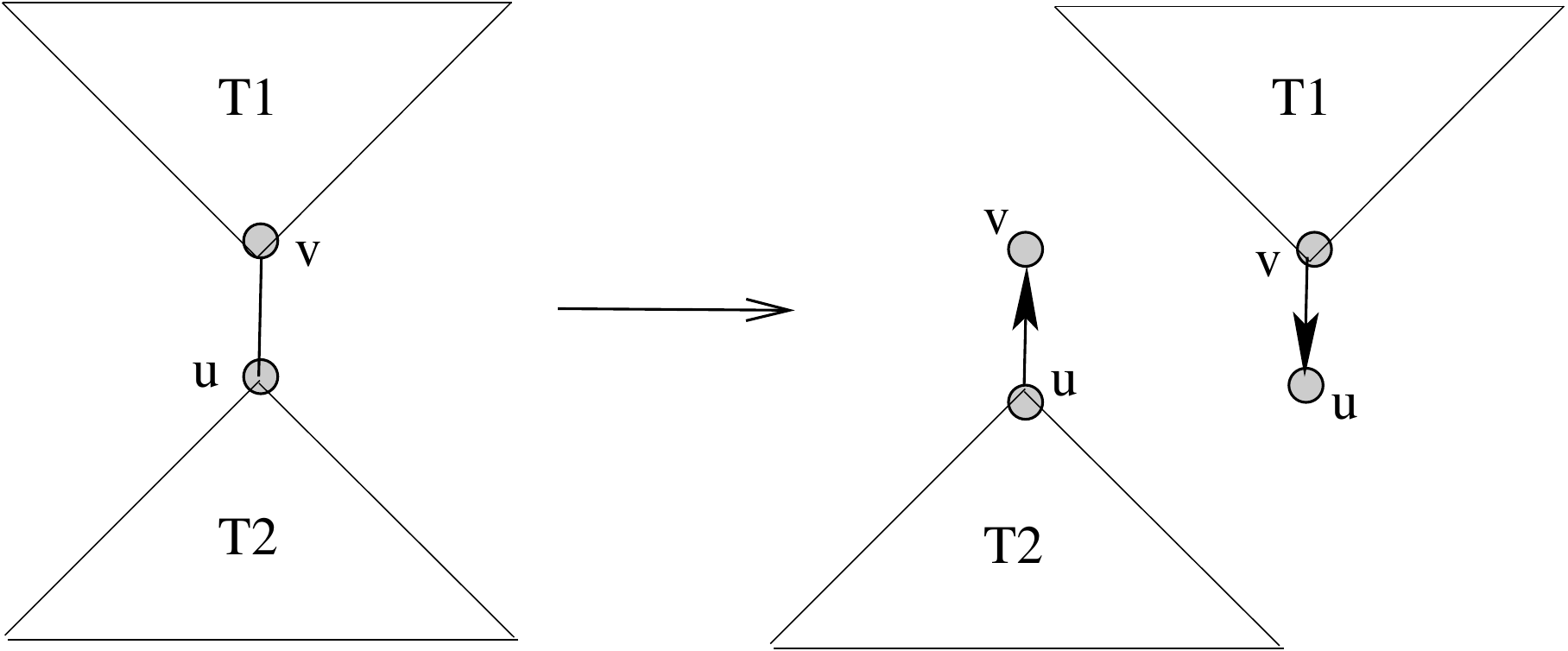}
\caption{Testing if there is
a {\em convergecast point} on the undirected edge $(u,v)$ is
reduced to computation of the costs $\mathrm{\Delta}(u,v)$ and $\mathrm{\Delta}(v,u)$ of moving all packets in the trees
$T2=T(u,v)$ and $T1=(v,u)$ .
} \label{fig1}
\vspace*{-0.4cm}
\end{figure}

We can use similar algorithm as computing locally values $\overrightarrow{\Delta}(v)$ on the line
in the previous section, consequently we have the following fact.

\begin{lemma}\label{xx}\mbox{ \ }\\
{\bf (a)}
Assume $u,u_1,u_2$ are neighbors of the vertex $v$.
Then, knowing $\mathrm{\Delta}(u_1,v),\, \mathrm{\Delta}(u_2,v)$ we can compute $\mathrm{\Delta}(v,u)$
in constant time.\\
{\bf (b)} If $\mathrm{\Delta}(v,w),\;\mathrm{\Delta}(w,v)$ are known then we can compute in constant time
all convergecast points on the edge $(v,w)$ as its subinterval.
\end{lemma}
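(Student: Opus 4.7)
My plan for both parts is to reduce each computation to the corresponding line-case argument, using the fact that $v$ (respectively the edge $(v,w)$) has only a constant-sized local structure to inspect.

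For (a), removing the edge $(v,u)$ leaves the vertex $v$ joined to $T(u_1,v)$ and $T(u_2,v)$ by the two edges $(u_1,v)$ and $(u_2,v)$. By the definition of $\mathrm{\Delta}$, the values $\mathrm{\Delta}(u_1,v)$ and $\mathrm{\Delta}(u_2,v)$ already summarise the outcome of gathering all packets of each of those subtrees to its root. Therefore it only remains to push these two effective bundles across one incident edge each and combine them with $\en(v)$. Each such one-edge push is exactly the inductive step of the line Algorithm~\DL justified by Lemma~\ref{lm:ass}: depending on whether $\mathrm{\Delta}(u_j,v)$ is at least the edge length $d_j$, lies in $[0,d_j)$, or is negative, the net contribution to $v$'s budget is read off from the formula in line 4, 5 or 6 of \DL, respectively. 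Since the two neighbours interact only through $v$'s own budget, I would simply add both contributions to $\en(v)$, giving $\mathrm{\Delta}(v,u)$ in constant time.

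For (b), I would treat the edge $(v,w)$ as a line segment with a single ``virtual'' agent at each endpoint, holding effective energies $\mathrm{\Delta}(v,w)$ at $v$ and $\mathrm{\Delta}(w,v)$ at $w$, where $d$ is the length of the edge. The question of whether a point at distance $x\in[0,d]$ from $v$ is a convergecast point then coincides with the edge-wise feasibility analysis performed by Algorithm~\CL. As in that algorithm, three cases arise according to the signs of $\mathrm{\Delta}(v,w)$ and $\mathrm{\Delta}(w,v)$: both non-negative (line 4 of \CL), or exactly one of them in deficit (lines 5 and 6). In each case the reachability of the meeting point from each side is a linear inequality in $x$, so the set of feasible $x$ is automatically a subinterval of $[0,d]$ whose two endpoints I would read off in constant time from $\mathrm{\Delta}(v,w)$, $\mathrm{\Delta}(w,v)$ and $d$.

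The main obstacle will be (b) in the mixed-sign case: to translate a deficit of $k$ on one side of the edge into the correct interval endpoint I must reproduce the ``halved deficit'' argument already used in lines 5 and 6 of \CL, namely that a surplus agent supplying $k$ missing energy through a one-way delivery effectively shifts the natural meeting point by $k/2$. Checking that this shift, together with the surplus-side reachability constraint, keeps the feasible set connected is essentially the same argument used on the line, only transported to one tree edge at a time. Once this correspondence is in hand, both (a) and (b) run in $O(1)$ as claimed.
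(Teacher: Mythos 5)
Your proposal is correct and follows exactly the route the paper intends: the paper gives no explicit proof of this lemma, merely asserting that the local computation is "similar" to the line case, and your reduction of part (a) to the three-case recurrence of Algorithm \DL applied independently to each branch (summed into $v$'s budget) and of part (b) to the sign-case analysis of Algorithm \CL on a single edge is precisely that analogy made explicit. The one point worth stating a little more carefully is why the two branches in (a) may be treated independently -- namely that any energy transferred between $T(u_1,v)$ and $T(u_2,v)$ must pass through $v$, so the round-trip/deficit charges of the line recurrence already account for all cross-branch help -- but you do note this and it is the same implicit assumption the paper makes.
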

\vskip 0.2cm
Let us root the tree $T$ at some vertex $root$ obtaining a rooted version $\overline{T}$ of $T$.

\vspace*{-0.2cm}
\begin{observation}
For each node $v\in \overline{T},\; v\ne root$, there are directed edges
$v\rightarrow father(v),$ (outgoing) and $father(v)\rightarrow v$ (ingoing edge).
Assume we are to mark all directed edges, but we have to obey the rule:
if an edge $v\rightarrow w$ is marked then all edges ingoing into $v$ from nodes other than $w$
should be marked.
We can obey the rule and mark all edges by first traversing the tree in postorder marking
for each visted node $v$ the edge $v\rightarrow father(v)$.
Then we can traverse in preorder and for each visited node $v$ mark the edge $father(v)\rightarrow v$.
\end{observation}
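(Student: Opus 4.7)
The plan is to verify, for each mark made by the two-phase procedure, that the rule is already satisfied at the moment the mark is laid down, and then argue that every directed edge is covered exactly once. Write $N(v)$ for the set of neighbors of $v$ in $\overline{T}$; since $v \ne root$ has exactly one father, the ingoing directed edges into $v$ are $\{u \to v : u \in N(v)\}$, of which $father(v) \to v$ is one and the remaining ones come from the children of $v$.

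First I would handle Phase 1. When $v$ is processed in postorder and we mark the edge $v \to father(v)$, the role of ``$v$'' in the rule is $v$ itself and the role of ``$w$'' is $father(v)$. So we must check that every edge $c \to v$ with $c \in N(v) \setminus \{father(v)\}$ is already marked; but these $c$ are exactly the children of $v$, and by the definition of postorder each such $c$ has already been visited, whence $c \to father(c) = c \to v$ was marked earlier in Phase 1. Thus Phase 1 is correct. At its conclusion, every edge of the form $v \to father(v)$ has been marked, and no edge of the form $father(v) \to v$ has been marked yet.

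Next I would handle Phase 2, which is the part where I expect the verification to be most delicate. When $v$ is visited in preorder and we mark the edge $father(v) \to v$, the rule now refers to the source $u := father(v)$: we must verify that every edge $x \to u$ with $x \in N(u) \setminus \{v\}$ is already marked. The set $N(u)$ decomposes into the children of $u$ (one of which is $v$) plus, if $u \ne root$, the node $father(u)$. The ``sibling'' edges $c \to u$ for children $c \ne v$ were all marked in Phase 1, independently of the order in which the siblings are visited in preorder. The only remaining in-neighbor to check is $father(u)$ when $u \ne root$: the edge $father(u) \to u$ was marked in Phase 2 at the moment $u$ itself was visited in preorder, and since $u = father(v)$ is a proper ancestor of $v$, it is visited strictly earlier than $v$ in preorder. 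If $u = root$, no such ancestor edge exists and the check is vacuous. This proves Phase 2 respects the rule.

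Finally, I would note completeness: every undirected edge $(v, father(v))$ gives rise to precisely two directed edges, one of which is marked in Phase 1 and the other in Phase 2, each exactly once. Since both phases visit every non-root node exactly once, every directed edge of $\overline{T}$ is marked exactly once, and we are done. The main subtlety, as already indicated, is the appeal in Phase 2 to the fact that $father(v)$ precedes $v$ in preorder, which is precisely what makes preorder the correct choice for the second sweep (just as postorder was the correct choice for the first).
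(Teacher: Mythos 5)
Your proof is correct. The paper states this observation without any proof at all, and your two-phase verification (postorder guarantees children's upward edges precede a node's own upward edge; preorder guarantees the ancestor's downward edge and Phase~1's sibling edges precede each downward edge) is exactly the intended justification, including the key point that $father(v)$ precedes $v$ in preorder.
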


We can replace the operation of {\em marking} an edge $e$ by the computation of
table $\mathrm{\Delta}(e)$. Using the observation above our algorithm computing convergecast points on $T$ can
 be written as the following pseudocode:

\begin{small}
    \begin{center}
    \fbox{\vspace*{0.1cm}
    \begin{minipage}{10.2cm}
    \vspace*{0.2cm} \noindent
     \hspace*{0.2cm} {\bf ALGORITHM} Convergecast on the tree $T$;
     \vskip .1cm
  \hspace*{0.3 cm}
  $root$:= any node of $T$;
  $\overline{T}$:= the directed version of $T$ rooted  at $root$;
%   \vskip .3cm
%  \hspace*{0.3 cm}
%  compute {\em postorder} and {\em preorder} numbering of nodes of $\overline{T}$;
   \vskip .1cm
 \hspace*{0.3 cm} {\bf for each} node $v\ne root$
of $\overline{T}$ in {\em postorder}  {\bf do}\vskip .1cm
\hspace*{0.8cm} Compute $\mathrm{\Delta}(v,father(v))$ \vskip 0.1cm
\hspace*{0.3 cm}  {\bf for each} node $v\ne root$
of $\overline{T}$ in {\em preorder}  {\bf do}\vskip .1cm
\hspace*{0.8cm} Compute $\mathrm{\Delta}(father(v),v)$  %\vskip .4cm
%\hspace*{0.3cm} {\bf Comment:} now $\mathrm{\Delta}$ is computed for every edge in both directions;
\vskip .1cm
\hspace*{0.3cm} {\bf for each} undirected edge $(v,w)$
of ${T}$   {\bf do}\vskip .1cm
 \hspace*{0.8cm} compute in constant time, knowing $\mathrm{\Delta}(v,w),\;\mathrm{\Delta}(w,v)$
\vskip .1cm
\hspace*{0.8cm} all convergcast points on the edge $(v,w)$ as its subinterval;
 \vskip .1cm
 \hspace*{0.2cm} {\bf return } the set of convergcast points as sets of subintervals of edges;
 \vspace*{0.1cm}
  \end{minipage}
  }
  \end{center}
  \end{small}
  \vskip 0.2cm
The next theorem follows now from Lemma~\ref{xx}.
\begin{theorem}
The convergecast problem for undirected trees can be solved in linear time.
\end{theorem}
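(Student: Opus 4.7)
The plan is to combine Lemma~\ref{xx} with a two-phase tree traversal that makes every prerequisite $\mathrm{\Delta}$-value available exactly when it is needed.

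First I would invoke the degree reduction described at the start of this subsection: any vertex of degree $>3$ is replaced locally by a small binary internal subtree with zero-cost edges, so that I may assume every vertex of $T$ has at most three neighbours. This is a linear-time preprocessing step that preserves the set of convergecast points.

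Next I would root the resulting tree at an arbitrary vertex $root$ and argue that the two traversals correctly compute $\mathrm{\Delta}(u,v)$ for all $2(n-1)$ directed edges. The postorder phase is the easy direction: when a non-root vertex $v$ is visited, every child $u$ has already been processed, so $\mathrm{\Delta}(u,v)$ is known; since $v$ has at most two children, Lemma~\ref{xx}(a) yields $\mathrm{\Delta}(v,father(v))$ in $O(1)$ time. The preorder phase is more delicate: when $v\ne root$ is visited, I need $\mathrm{\Delta}(father(father(v)),father(v))$, which was produced in a previous preorder step (or is not needed when $father(v)=root$), together with $\mathrm{\Delta}(u,father(v))$ for the sibling $u$ of $v$ in $\overline{T}$, which was produced in the postorder phase. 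These are exactly the inputs Lemma~\ref{xx}(a) requires to emit $\mathrm{\Delta}(father(v),v)$.

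Once $\mathrm{\Delta}(v,w)$ and $\mathrm{\Delta}(w,v)$ are known for every undirected edge $(v,w)$, a final pass applies Lemma~\ref{xx}(b) to each edge in constant time, yielding the subinterval of convergecast points on that edge; taking the union over edges gives the full answer. Both traversals do $O(1)$ work per vertex and the final loop does $O(1)$ work per edge, so the total running time is $O(n)$. The hard step will be verifying the preorder invariant: one has to check that each input Lemma~\ref{xx}(a) requires to compute $\mathrm{\Delta}(father(v),v)$ has in fact been produced in one of the two previous phases, despite the fact that the natural recursion would ask for a $\mathrm{\Delta}$-value that flows against the rooted orientation. The degree-$3$ reduction is what keeps this bookkeeping bounded, since it caps the number of relevant neighbours at a constant.
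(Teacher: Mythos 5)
Your proposal matches the paper's argument: the same degree-$3$ reduction, the same postorder pass for $\mathrm{\Delta}(v,father(v))$ followed by a preorder pass for $\mathrm{\Delta}(father(v),v)$ (the paper packages the availability of inputs in its edge-marking Observation, which is exactly your "preorder invariant"), and the same final per-edge application of Lemma~\ref{xx}(b). No substantive difference in approach or in the correctness argument.
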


\vspace*{-0.55cm}

\subsection{Broadcasting in the tree}

We use terminology from the previous section and ideas from broadcasting on line.
Define ${\mathcal B}(u,v)[i,j]$ as the potential of energy at the vertex $u$
for the delivery of initial information of $i$ agents placed initially
at $u$ to all agents in subtree $T(u,v)$,
using only agents in $T(u,v)$, assuming that at the end we have $j$ agents at $u$.
It is similar to the definition of ${\mathcal B}$ for broadcasting on  the line.
Hence each ${\mathcal B}(u,v)$ is a $n\times n$ table.
The use of quadratic tables results in nonlinear algorithm.

\begin{observation}
The main innovation here is introducing many agents at the same node.
For example suppose we have a tree with single branch of $k$ nodes, each with single agent, from $r$ to $u$. The edges of this branch have zero cost. However $v$ has $k$ outgoing edges with large cost.
Then optimal broadcasting of such a tree from $r$ first gathers all $k$ agents at node $v$, then
each agent travels a different edge from $v$.
\end{observation}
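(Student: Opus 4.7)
The plan is to sandwich the cost of any successful broadcast from $r$ between two equal quantities and show that the extremal configuration is exactly the \emph{gather-then-disperse} schedule claimed in the observation. Let $L_1,\ldots,L_k$ denote the lengths of the $k$ expensive edges out of $v$ and $\ell_i$ the leaf at the far end of the $i$-th such edge.

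For the upper bound, every edge on the $r$-to-$v$ branch has cost $0$, so all $k$ branch agents can be moved to $v$ at no energy expense; the meeting rule places the broadcast packet in every one of their memories, and the full initial energy budget is now pooled at $v$. Sending the $i$-th of these agents out along the $i$-th expensive edge costs exactly $L_i$ and delivers the packet to $\ell_i$, giving a total cost of $\sum_i L_i$.

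For the matching lower bound, each leaf $\ell_i$ must be physically visited by some agent carrying the packet, and the only route to $\ell_i$ enters $v$ and crosses the $i$-th expensive edge end to end; so at least $L_i$ units of locomotion occur on that edge. Summing, any successful schedule consumes at least $\sum_i L_i$ in total. Equality forces each expensive edge to be traversed exactly once, in the direction $v \to \ell_i$; in particular $k$ distinct agents must leave $v$ already holding the packet. Were fewer than $k$ agents to depart from $v$ along expensive edges, one of them would have to serve two distinct such edges, and since those edges share only the point $v$ it would backtrack through $v$, spending at least $2\min_i L_i$ additional energy and violating optimality. Since the tree has no agents outside the branch, these $k$ departing agents must be exactly the agents gathered from the branch.

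The main obstacle is checking that the $L_i$ lower bound survives arbitrary energy transfers at meeting points: one might worry that transferring energy at $v$ could substitute for a required traversal. It does not, because the bound counts units of energy consumed by locomotion, and each unit of movement along an edge costs one unit irrespective of which agent performs the movement or where that agent's energy was originally located. This is the essential subtlety, and it is precisely the reason why the tree algorithm must parametrise tables by the number of agents present at each intermediate node, which is what forces the quadratic $n \times n$ tables ${\mathcal B}(u,v)[i,j]$ introduced just above the observation.
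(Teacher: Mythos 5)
The paper offers no proof of this observation at all: it is stated as a motivating example for why the broadcast tables on trees must be indexed by the number of agents co-located at a node, so any argument you give is necessarily your own route. Your counting argument is the natural one and is essentially sound: the upper bound via the explicit gather-then-disperse schedule, the lower bound by charging at least $L_i$ units of locomotion to the $i$-th expensive edge, and the correct remark that this charge is insensitive to energy transfers because locomotion cost is paid per unit of distance regardless of which agent's tank it comes from. Two points deserve attention. First, you silently assume that the far endpoints $\ell_i$ of the expensive edges cannot contribute any motion of their own. The paper's own convention (stated at the start of Section~\ref{sect:tree}) is that every leaf of the truncated tree holds an agent; if such a leaf agent had energy, it could walk toward $v$ and collect the information partway or even at $v$ itself, in which case the total-locomotion bound on that edge still holds but the conclusion that ``$k$ distinct agents must leave $v$ holding the packet'' fails, and with it the claimed structure of the optimal schedule. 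Your argument is correct only under the (evidently intended, but unstated) hypothesis that the agents beyond $v$ are absent or immobile; you should say so explicitly. Second, a minor quantitative slip: an agent serving two expensive edges $i$ and $j$ pays $2L_i+L_j$ on them, which exceeds the lower-bound contribution $L_i+L_j$ by $L_i\ge\min_m L_m$, not by $2\min_m L_m$; the conclusion is unaffected. Finally, note that the paper's stated optimality criterion for broadcast is minimizing the energy drawn from the broadcasting agent rather than total locomotion; for this instance the two notions point to the same schedule, but since you are supplying the rigor the paper omits, the criterion you are optimizing should be named.
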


\begin{theorem}
The broadcasting on a tree can be done in polynomial time.
\end{theorem}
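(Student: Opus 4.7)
The plan is to design a polynomial-time dynamic program that generalises the energy-potential tables used for the line to the binary tree obtained via the degree-$3$ reduction of the previous subsection. First I would root $T$ at the broadcasting agent and orient every edge toward the root. For every non-root vertex $v$, write $c_1(v), c_2(v)$ for its children (if any), $w_\ell(v)$ for the weight of the edge $(v, c_\ell(v))$, $a_v$ for the number of agents originally placed at $v$ and $e_v$ for their total initial energy. Extending the definition stated before the theorem, I would maintain a table ${\mathcal B}(v)[i,j]$, for $0 \le i, j \le n$, whose value is the maximum amount of energy that can be left at $v$ once the broadcast inside $T(v, father(v))$ is complete, under the additional assumption that $i$ extra agent bodies (carrying no extra energy) were placed at $v$ before the subtree phase and that exactly $j$ bodies remain at $v$ at the end; infeasible entries receive $-\infty$. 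Each table has $O(n^2)$ entries since the instance contains at most $n$ bodies.

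Next I would set up the recurrence. For a leaf $v$ the only feasible configuration is $j = i + a_v$, so ${\mathcal B}(v)[i, i+a_v] = e_v$. For an internal $v$ the only genuine decisions are how many bodies $k_\ell$ to push into the subtree at $c_\ell$ and how many bodies $j_\ell$ to bring back, for $\ell = 1, 2$. The subtree at $c_\ell$ is then handled by the subproblem ${\mathcal B}(c_\ell)[k_\ell, j_\ell]$, the $k_\ell + j_\ell$ edge crossings cost $(k_\ell+j_\ell)w_\ell$ in total, and the number of bodies remaining at $v$ is $j = (i+a_v-k_1-k_2)+j_1+j_2$. This yields
\[
{\mathcal B}(v)[i,j] \;=\; e_v \;+\; \max\Bigl\{\sum_{\ell=1}^{2}\bigl({\mathcal B}(c_\ell)[k_\ell,j_\ell] - (k_\ell+j_\ell)w_\ell\bigr)\Bigr\},
\]
the maximum being taken over nonnegative integers $k_1, k_2, j_1, j_2$ satisfying $k_1+k_2 \le i+a_v$, $j = i+a_v-k_1-k_2+j_1+j_2$ and ${\mathcal B}(c_\ell)[k_\ell,j_\ell] \ge 0$ for $\ell=1,2$. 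The broadcast from the source $r$ is feasible iff the analogous combining step at $r$ with $i = 0$ returns a nonnegative value for some $j \ge 1$.

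Correctness would be proved by induction on subtree size. The easy direction exhibits a schedule realising the stated value by concatenating the schedules of the two child subtrees between the outflow of $k_\ell$ bodies and the inflow of $j_\ell$ bodies on edge $(v, c_\ell)$. The hard direction is an exchange lemma: given any feasible broadcast schedule, one may reorder and resynchronise moves so that all activity inside $T(c_\ell, v)$ happens in a single contiguous phase between the arrival at $c_\ell$ of the last of the $k_\ell$ bodies and the departure back to $v$ of the first of the $j_\ell$ bodies, without increasing the energy spent on any edge. This rests on the same fungibility exploited in the line algorithms: since meeting times can be freely chosen and both energy and information merge on contact, only the gross number of crossings of each edge matters, and these are exactly $k_\ell + j_\ell$.

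Finally, for complexity, there are $O(n)$ tables, each of size $O(n^2)$, and each entry is computed as a maximum over $O(n^3)$ quadruples $(k_1, k_2, j_1, j_2)$, one being fixed by the constraint on $j$. This gives an $O(n^6)$ algorithm, which is polynomial. The principal obstacle I expect is the exchange lemma: without a guarantee that every optimal schedule admits a $(k_\ell, j_\ell)$-summary on each edge, one would have to allow schedules in which energy is shuttled an unbounded number of times across a single edge, and the DP state space would no longer be finite. Once that lemma is established, the remaining correctness and complexity arguments are routine.
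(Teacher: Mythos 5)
Your overall plan is the paper's: after the degree-$3$ reduction, compute for each edge an $n\times n$ table ${\mathcal B}(\cdot)[i,j]$ indexed by the number of agent bodies entering and leaving the subtree, combine the two children's tables locally at each internal vertex, and accept the (acknowledged) cost of quadratic tables. The paper additionally runs a postorder pass followed by a preorder pass so that every vertex can be tested as a source from its neighbours' tables, whereas you fix one source and root there; that costs at most a factor of $n$ if all sources are wanted, so it is not a real difference. You are also right that the crux --- which the paper itself leaves unproved with ``we omit the details'' --- is the exchange/normalisation lemma showing that an optimal schedule can be summarised on each edge by the pair $(k_\ell,j_\ell)$.

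The genuine gap is your treatment of deficits. You stipulate that the $i$ entering bodies carry no extra energy, send infeasible entries to $-\infty$, and require ${\mathcal B}(c_\ell)[k_\ell,j_\ell]\ge 0$ in the recurrence. This makes it impossible for a parent, or a surplus harvested from a sibling subtree, to ship energy down into a child subtree that cannot broadcast on its own. Concretely: the root $r$ holds one agent with energy $100$, its child $c$ and grandchild $g$ each hold an agent with energy $0$, and both edges have length $1$. Broadcast from $r$ is trivially feasible (the root agent walks down), yet ${\mathcal B}(c)[k,\cdot]=-\infty$ for every $k$ under your definition, so your combining step at $r$ reports failure. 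The paper avoids this by making ${\mathcal B}$ a signed potential, exactly as $\overrightarrow{\Delta}$ on the line: a negative entry records the minimal energy that must be injected at the subtree's root to make the task feasible, and the combination step adds signed potentials. You would need to redefine your table that way and drop the $\ge 0$ constraint; you would then also have to argue that a single signed number per pair $(i,j)$ suffices, i.e., that a subtree never needs more ``seed'' energy up front than its net deficit --- the tree analogue of the surplus/deficit case split in \DL, and part of the omitted exchange argument. A smaller slip of the same flavour: your leaf rule that ``$j=i+a_v$ is the only feasible configuration'' contradicts your later use of $j_\ell$ as the number of bodies brought back across the edge; the two indices must be given one consistent meaning.
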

{\bf Sketch of the proof}\
 We can show the following claim, analogous to Lemma~\ref{xx}.

\begin{claim}\label{xxx}\mbox{ \ }\\
{\bf (a)}
Assume $u,u_1,u_2$ are neighbors of vertex $v$.
\
Then we can compute ${\mathcal B}(v,u)$
in polynomial time, knowing tables ${\mathcal B}(u_1,v),\, {\mathcal B}(u_2,v)$.\\
{\bf (b)} If tables ${\mathcal B}(u,w)$ are known for each neighbor of $u$
then we can check in polynomial time if $u$ is a valid source of broadcast.
\end{claim}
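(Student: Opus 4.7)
The structure mirrors the convergecast algorithm on the tree: root $T$ at an arbitrary vertex, compute $\mathcal{B}(v,\mathit{father}(v))$ for every non-root $v$ in a postorder traversal, and then compute $\mathcal{B}(\mathit{father}(v),v)$ in a preorder traversal, so that a table $\mathcal{B}(u,w)$ is available for every directed edge $u\to w$ of $T$. The legality of this order is the same edge-marking observation that was used just before the convergecast algorithm. Once all directed-edge tables are known, apply Claim~\ref{xxx}(b) to every vertex $u$ of $T$ to decide which vertices are valid broadcast sources. Given the Claim, the total running time is polynomial, so the whole proof reduces to establishing the Claim.

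For part (a), suppose $v$ has neighbors $u,u_1,u_2$ (using the preliminary reduction to degree $\le 3$) and assume $\mathcal{B}(u_1,v)$ and $\mathcal{B}(u_2,v)$ are already filled in. I would compute $\mathcal{B}(v,u)[i,j]$ by enumerating how the $i$ informed agents present at $v$, together with the energy pooled there, are split between the two child branches. Choose $i_1,i_2$ for the numbers of informed agents pushed into $T(u_1,v)$ and $T(u_2,v)$, and $j_1,j_2$ for the numbers returning. Along edge $(v,u_k)$ of length $d_k$, sending $i_k$ agents down and receiving $j_k$ back costs $(i_k+j_k)\,d_k$ units of energy, while the surplus or deficit brought back up from $u_k$ is exactly $\mathcal{B}(u_k,v)[i_k,j_k]$ (with the sign convention of the line case, where a negative value encodes how much extra energy has to flow in from $v$). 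Maximizing the resulting energy balance at $v$ over all decompositions consistent with $i-i_1-i_2+j_1+j_2=j$ yields $\mathcal{B}(v,u)[i,j]$. Part (b) is the analogous combination across the at most three incoming tables at $u$ simultaneously, with the additional constraint that the broadcast be seeded by an agent initially located at $u$.

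The step I expect to be the main obstacle is showing that this local decomposition is lossless: an optimal broadcast never requires an informed agent to shuttle back and forth across an edge $(v,u_k)$ carrying only a partial union of the accumulated information, so that after first being informed every agent crosses $(v,u_k)$ at most once in each direction. An exchange argument analogous to the one underlying Cases 1--3 of Lemma~\ref{lm:ass}, where in-place energy transfer lets a single coordinated pass replace any multi-traversal, should suffice: any more complex schedule can be permuted and coalesced into one matching the DP decomposition without increasing the total cost. This canonicalization both justifies the recursion and bounds each $\mathcal{B}$-table by $O(n^2)$ entries, each computable from the children's tables by a $\poly(n)$ enumeration, giving the claimed polynomial running time.
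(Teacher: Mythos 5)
Your plan coincides with what the paper intends, but be aware that the paper gives no actual proof of this claim: it appears inside a proof sketch that ends with ``We omit the details,'' so there is nothing concrete to compare against. What you propose --- an $n\times n$ table $\mathcal{B}(v,u)[i,j]$ per directed edge, filled by a postorder/preorder traversal and combined at each vertex by enumerating the split $(i_1,j_1),(i_2,j_2)$ of informed agents across the two child branches --- is exactly the computation the authors describe informally, and your recurrence with edge cost $(i_k+j_k)d_k$ plus the children's balances is a plausible instantiation of it. In that sense you have supplied more detail than the paper does.

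That said, the step you yourself flag as the main obstacle is also the step that remains open in both your write-up and the paper: the canonicalization lemma asserting that an optimal broadcast schedule can be normalized so that each informed agent crosses each edge at most once in each direction, which is what makes the local decomposition lossless and the tables well-defined. You assert that an exchange argument ``should suffice'' but do not carry it out, and neither do the authors. Two further points your sketch leaves implicit: (i) when one child branch returns a surplus and the other requires a subsidy, simply summing $\mathcal{B}(u_1,v)[i_1,j_1]+\mathcal{B}(u_2,v)[i_2,j_2]$ presupposes that the branches can be scheduled in an order (surplus branch first) that makes the pooled energy available when needed --- this needs to be argued, especially if both branches interleave; and (ii) for part (b) the seeding at the source $u$ requires combining up to three incoming tables simultaneously rather than two, together with the agents and energy initially resident at $u$, which is a slightly different combination step than the one in part (a). None of these seem fatal, but they are the actual content of the claim, and as written your proposal, like the paper, establishes the claim only modulo them.
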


The tables ${\mathcal B}$ for each edge are computed locally in a similar way as values of $\mathrm{\Delta}$
for edges, traversing first the rooted version of the
tree in postorder, and then in preorder.

Each edge can be processed in polynomial time,
due to the claim, hence the whole algorithm works
in polynomial time. We omit the details.

%#####################################################################
\section{NP-Completeness for digraphs and graphs }
We use the following NP-complete problem:
\myskip
{\bf Integer Set Partition:} For a given set $X$ with integer weights
check if $X$ can be partitioned into two disjoint subsets with
equal total sums of weights.

\begin{theorem}
The delivery problem is NP-complete for general directed graphs.
\end{theorem}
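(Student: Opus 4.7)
The plan is to reduce from \textbf{Integer Set Partition}, as the authors suggest. Given an instance $X=\{x_1,\ldots,x_n\}$ with $\sum_i x_i = 2S$, I will build a directed graph $G$, a source $s$, a target $t$, and an agent placement such that the packet at $s$ can reach $t$ if and only if $X$ admits a balanced partition.

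First I would dispose of NP membership. Although agent positions and transferred energies are rational, a standard exchange argument shows that one may restrict attention to schedules with only polynomially many ``events'' (agent meetings, energy handoffs, packet handoffs at nodes, and turn-arounds), each event described with polynomially many bits in the input. A nondeterministic machine guesses the sequence of events and verifies in polynomial time that the energy and distance constraints are respected and the packet arrives at $t$.

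For hardness, the construction has a long ``main'' directed path from $s$ to $t$ that the packet must traverse, together with one \emph{choice gadget} $G_i$ per element $x_i$. The key point, which is available in a digraph but not on a line, is that directed arcs can lock an agent into one of two disjoint sub-regions. Concretely, I would split $G$ into a \emph{left half} and a \emph{right half}, each of which must receive at least $S$ units of delivered energy in order for the packet to complete its traversal (for instance, by forcing the packet carrier to make a round trip into each half that costs exactly $S$, so that the total external energy consumed is $2S$, matching the total supply). Each gadget $G_i$ consists of an agent $a_i$ of energy $x_i$ placed at a node $v_i$ with exactly two outgoing arcs, one leading into the left half and one into the right half, with no arcs coming back; since the agent cannot return once it has committed, and since the two halves are mutually unreachable from one another except through $s$ or $t$, the agent effectively selects a side of the partition. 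Agents on the chosen side then meet the carrier and transfer their energy.

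Correctness then follows in two directions: a balanced partition $X=A\sqcup B$ with $\sum A=\sum B=S$ induces a schedule in which each $a_i$ walks into the left or right half according to its part, and the carrier makes its two round trips successfully; conversely, any successful schedule assigns each $a_i$ to at most one side, and the energy conservation equation forces each side to accumulate exactly $S$, yielding a partition. Membership in NP together with this reduction gives NP-completeness. I expect the main obstacle to be designing the two-sided gadget so that (i) an agent is really committed to one side by the directed structure, (ii) the two halves truly require $S$ each and cannot ``cheat'' by transferring energy across through the carrier at $s$ or $t$, and (iii) the global energy budget is tight, so that Subset-Sum-with-slack does not accidentally suffice. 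The tightness in (iii) is precisely what upgrades the reduction from Subset Sum to Partition and justifies the authors' choice of source problem.
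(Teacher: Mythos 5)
Your choice of source problem (Integer Set Partition) and your key mechanism --- directed arcs that lock each element-agent into an irrevocable binary choice, with distances calibrated so that both choices must receive exactly half the total energy --- are exactly the paper's. But the proposal stops short of an actual reduction: the gadget is specified only by a list of desiderata, and you yourself flag the three hardest design issues (commitment, no cross-transfer, tightness of the budget) as unresolved. That is the whole content of the hardness proof here; without concrete edge lengths and energies, neither direction of the equivalence can be verified, and in particular it is not clear how your ``two round trips of cost $S$ each'' would be paid for, since the agents that enter a half must still physically reach the carrier, and any positive travel cost inside a half perturbs the exact-$S$ accounting you rely on.

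For comparison, the paper's construction is a serial two-leg path rather than two parallel halves, and it is considerably simpler than what you sketch. There are three extra nodes $s$, $a$, $t$; each element $x$ of total weight $E=\sum_x w(x)$ hosts an agent with energy $w(x)$ and has two outgoing arcs, one to $s$ of length $w(x)/3$ and one to $a$ of length $0$; the arcs $s\rightarrow a$ and $a\rightarrow t$ have lengths $E/3$ and $E/2$. If $\alpha$ and $\beta$ are the weights of the agents choosing $s$ and $a$ respectively, then $s$ receives $\tfrac{2}{3}\alpha$, the leg $s\rightarrow a$ is feasible iff $\tfrac{2}{3}\alpha\geq E/3$, i.e.\ $\alpha\geq\beta$, and the leg $a\rightarrow t$ is feasible iff $\tfrac{1}{3}\alpha+\tfrac{2}{3}\beta\geq E/2$, i.e.\ $\beta\geq\alpha$; hence delivery is possible iff $\alpha=\beta$. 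Note how the deliberate $1/3$ transit loss on the way to $s$ is what makes both inequalities collapse to equality --- this is the kind of concrete calibration your plan still needs. One point in your favour: you at least address membership in NP (via a polynomial bound on the number of events in a schedule), which the paper asserts but does not argue; that part of your sketch is worth keeping, though it too would need the exchange argument spelled out.
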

\begin{proof}
Consider the graph of the form shown in Figure~\ref{graph2}. The set of nodes consists
of the set of middle  nodes (the set $X$) and three additional nodes $s,t,a$. Initially there are agents only in middle nodes.

\noindent Denote by $E$ the total energy of all agents. Hence $E\,=\, \alpha+\beta.$
\myskip
Energy of an element $x\in X$ is set to be $e(x)=w(x)$. There is no energy in other points.
The length of en edge $x\rightarrow s$ equals $w(x)/3$.
The length of the edge $s \rightarrow a$ is $E/3$ and the
length of the edge $a\rightarrow t$ is $E/2$.
The lengths of other edges are zero.

%\vspace*{-0.25cm}
 \begin{figure}[h]
\begin{center}
\mbox{ \ }
\centering
\vspace*{-0.25cm}
\includegraphics[width=8cm]{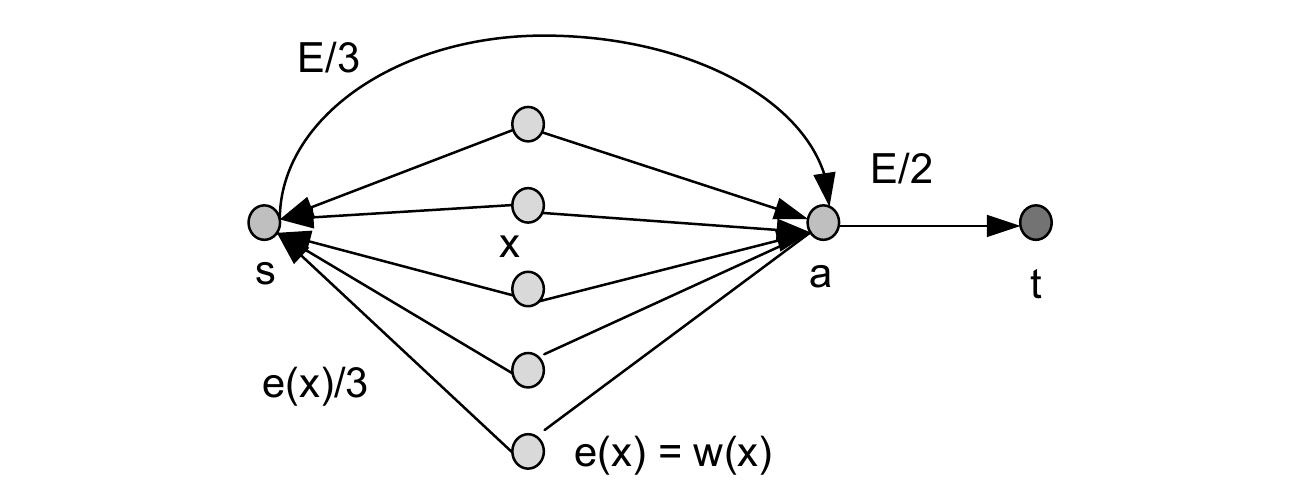}
\caption{The energy of all three non-middle nodes is zero. The weight
$w(x)$ of each middle node $x$ equals agent's energy $e(x)$.
Delivery from $s$ to $t$ is possible only iff
the set of weights can be partitoned into two sets of the same sum.}\label{graph2}
\end{center}
\vspace*{-0.65cm}
\end{figure}
%\vskip 0.2cm%
%
\noindent  Assume $X_1,\, X_2$ are the sets of agents which move to $s$ and $a$, respectively.
Let $$\alpha\;=\; \sum_{x\in X1}\, w(x),\ \beta\;=\; \sum_{x\in X2}\, w(x)$$
Total energy coming to $s$ is $\frac{2}{3}\alpha$ and total energy coming to $a$
is
$$
\frac{2}{3}\alpha - (\alpha+\beta)/3 + \beta\;=\; \alpha/3 +\frac{2}{3}\beta
%\vspace*{-0.25cm}
$$
%\myskip
Delivery from $s$ to $t$ is possible if and only if
$$\frac{2}{3}\alpha\; \ge\;  E/3\;=\; (\alpha+\beta)/3 \ \ \mbox{  and  } \ \ \alpha/3 +\frac{2}{3}\beta\;
\ge\;  E/2\;=\;   (\alpha+\beta)2$$
It is equivalent to:
%$$\alpha\ge \beta\ \&\ \alpha/3 +\frac{2}{3}\beta\;  \ge\;  \alpha/2+\beta/2$$
%Further it is equivalent to:
$$\alpha\ge \beta \ \ \mbox{  and  } \ \  \beta\ge \alpha.$$
Consequently $\alpha=\beta$. In other words delivery from $s$ to $t$ is possible
iff the integer partition problem is solvable.
$NP$-completeness of the delivery problem follows from $NP$-completeness of
the integer partition problem.
\end{proof}
\begin{theorem}
The delivery, convergecast and broadcast problems are
 NP-complete for general undirected graphs.
\end{theorem}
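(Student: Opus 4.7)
The plan is to extend the directed-graph construction from the previous theorem to the undirected setting, treating each of the three problems in turn.

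For delivery, I would take the same graph $X \cup \{s, a, t\}$ with the same edge lengths as in the directed construction, but with each edge undirected. I would then argue that in the undirected setting, no traversal of an edge in the ``backward'' direction is useful: for the $x$-$a$ and $x$-$s$ edges, an agent returning towards a middle node only wastes energy, while for the $s$-$a$ and $a$-$t$ edges, reversing would move the packet or energy away from its target. A straightforward case analysis shows that any undirected schedule is dominated by one that respects the directed flow of the original construction. Consequently the same energy-accounting conditions $2\alpha/3 \geq E/3$ and $\alpha/3 + 2\beta/3 \geq E/2$ remain binding, and together they force $\alpha = \beta$, i.e., Integer Set Partition is solvable.

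For convergecast and broadcast, I would reduce from the newly-established undirected delivery problem. Given a delivery instance $(G, s, t)$, augment it to an instance $G'$ by placing two new zero-energy agents carrying distinct information items: one at $s$ carrying $I_s$ and one at $t$ carrying $I_t$. Since these new agents cannot move, their information can travel only via middle-agent carriers. The claim is that convergecast on $G'$, respectively broadcast starting from the $s$-agent on $G'$, is possible iff delivery in $G$ is possible. The easy direction follows because a delivery schedule naturally sweeps up every piece of information it encounters along the $s$-$a$-$t$ path, giving convergecast at the delivery agent and broadcast from the $s$-agent. The reverse direction uses the same energy-budget argument as for delivery.

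The main obstacle is tightness: ruling out alternative schedules, such as a convergecast meeting at the intermediate node $a$ that bypasses part of the $s$-$a$-$t$ path, or a broadcast that trivially originates at a middle agent. This calls for a careful global energy-accounting argument, and in the worst case the construction may need tuning (for instance, lengthening the $a$-$t$ edge so that any round trip to $t$ consumes the entire budget and forces the $I_t$-carrier onto a one-way path coinciding with the delivery route, or inserting auxiliary zero-energy information-bearing agents at further ``hard'' positions to prevent cheap shortcuts). Finally, NP-membership for all three problems is immediate, as a proposed schedule has polynomial size and is verifiable in polynomial time.
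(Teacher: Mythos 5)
There is a genuine gap in the delivery step, and it propagates to the other two problems. You keep the directed construction's edge lengths unchanged and assert that every backward traversal is dominated, but this is false: the $x$--$a$ edges have length zero, so in the undirected graph an agent (or the packet) can move between $a$ and any middle node for free. Concretely, the packet at $s$ can travel $s \rightarrow x_{\min} \rightarrow a$ at cost $w(x_{\min})/3$ instead of $E/3$, and every agent can reach $s$ via $x \rightarrow a \rightarrow x_{\min} \rightarrow s$ at cost only $w(x_{\min})/3$, funneling essentially all of $E$ to $s$. For an instance such as $X=\{1,100,100\}$ (no partition exists, since $E$ is odd) delivery then succeeds, so the unmodified undirected graph does not encode Integer Set Partition. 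The paper closes exactly this hole by adding $E$ both to the energy of every middle agent and to the length of every edge incident to a middle node; since each agent can then afford to cross at most one $X$-incident edge, and only outward, the directed accounting is recovered. This inflation is the one essential idea of the theorem's proof, and it is absent from your argument.

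For convergecast and broadcast your plan (zero-energy information-bearing agents at $s$ and $t$, then reduce to delivery) is close in spirit to the paper's, which simply places an agent at every node of the (inflated) graph, takes $t$ as the convergecast target and $s$ as the broadcast source, and observes that both tasks reduce to the $s$-to-$t$ delivery. However, you explicitly defer the tightness argument (``the construction may need tuning''), and the tuning required is precisely the $+E$ inflation above; as written, the proposal does not constitute a proof of any of the three reductions.
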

%\vskip 0.2cm
 \begin{figure}[h]
 \vspace*{-0.6cm}

\begin{center}
\mbox{ \ }
\centering
\includegraphics[width=8cm]{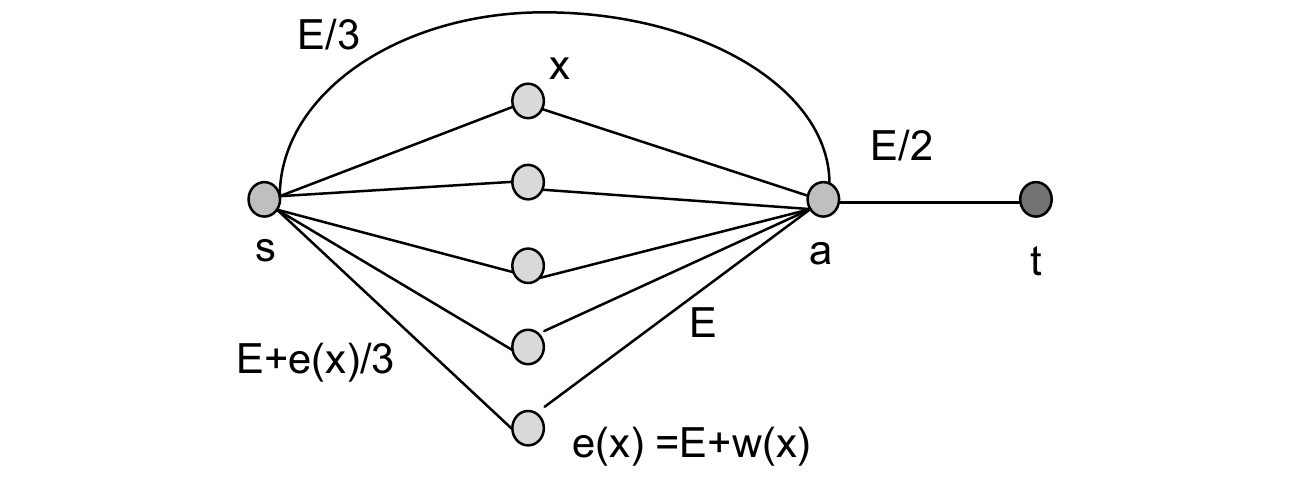}
\caption{
The undirected version of the graph from Figure~\ref{graph2}. 
The weights of middle nodes and lengths of edges incident to middle nodes are
increased by $E$.
}\label{graph1}
\end{center}
\vspace*{-1cm}

\end{figure}

\begin{proof}
We take an undirected version of the graph from the previous proof, see Figure~\ref{graph1}.
Additionally for each $x\in X$ we increase by $E$ the energy of $x$ and edges $x\rightarrow s,
\, x\rightarrow a$.
The agents are placed initially in each node of the graph.

%
%\vskip 0.2cm
%\vspace{-0.5cm}
\noindent {\bf Delivery.} Due to the drastic increase of the lengths of edges
in the delivery from $s$ to $t$
the edges from $X$ to other nodes can be used only once and in one direction (from $X$
to outside part). 

\noindent {\bf Convergecast.} Now we can take $t$ as the convergast node.
The problem reduces to the delivery from $s$ to $t$, since all energy is in $X$ and
each node contains an agent.
Hence the problem is reduced to the delivery in the directed graph
from the previous point.

\noindent {\bf Broadcast.} The broadacast from $s$ reduces to the delivery from $s$ to $t$,
since in the delivery all agents from $X$ can move to $s$ or to $a$.
The information from $s$, if it has to arrive at $t$, should also arrive at $a$, where
agents coming from $X$ to $a$ might get it.
\end{proof}
\section{Final Remarks}
\vspace*{-0.15cm}
It is rather surprising that, without energy
exchange, even the simplest problem of data delivery is NP-complete
in the simplest  environment of the line, while, as we have shown in this paper, all considered communication problems with energy exchange
are solvable in polynomial time even for tree networks.
On the other hand it is not surprising that  energy exchange in
general graphs does not help and the problems are NP-complete.
\vspace*{-0.15cm}


\begin{thebibliography}{1234}
\vspace*{-0.15cm}
\bibitem{Anaya}
J. Anaya, J. Chalopin, J. Czyzowicz, A. Labourel, A. Pelc, Yann Vax\'es: Collecting Information by Power-Aware Mobile Agents. {\em Proc. DISC} (2012), pp. 46-60.

\bibitem{Albers}
S. Albers: Energy-efficient algorithms. {\em Comm. ACM 53(5)}, (2010), pp. 86-96.

\bibitem{AH}
S. Albers, M.R. Henzinger. Exploring unknown environments. {\em SIAM J. on Comput.}, 29 (4), pp.1164-1188.


%\bibitem{AG}
%S. Alpern and S. Gal, The theory of search games and rendezvous. {\em Kluwer Academic Publ.}, 2002.

\bibitem{AGP}
B. Awerbuch, O. Goldreich, D. Peleg and R. Vainish: A Trade-Off between Information and Communication in Broadcast Protocols. J. ACM 37(2): 238-256 (1990).

%\bibitem{Ambuhl}
%Ambuhl, C. An optimal bound for the MST algorithm to compute energy efficient broadcast trees in wireless networks. In {\em Proc. of  32nd ICALP} (2005), pp. 1139-1150.

%\bibitem{AOSY}
%H.~Ando, Y.~Oasa, I.~Suzuki, and M.~Yamashita.
%Distributed memoryless point convergence algorithm for mobile agents with limited visibility.
%{\em IEEE Trans. on agentics and Automation}, {15(5)} (1999), pp. 818-828.

%\bibitem{AA06}
%D. Angluin, J. Aspnes, Z. Diamadi, M.J. Fischer, R. Peralta, Computation in networks of passively mobile finite-state sensors, {\em Distributed Computing } (2006), pp. 235-253.

\bibitem{AGS}
V. Annamalai, S. K. S. Gupta, and L. Schwiebert,
On Tree-Based Convergecasting in Wireless Sensor Networks,
{\em Wireless Communications and Networking, IEEE}, vol. 3 (2003), pp. 1942 - 1947.

\bibitem{AIS}
J. Augustine, S. Irani, C. Swamy. Optimal powerdown strategies. {\em SIAM J. Comput. 37} (2008), pp. 1499-1516.

\bibitem{AB}
I. Averbakh, O. Berman. A heuristic with worst-case analysis for minimax routing of two traveling salesmen on a tree. {\em Discr. Appl. Math.}, 68 (1996), pp. 17-32.

\bibitem{Azar}
Y. Azar. On-line load balancing. In: {\em A. Fiat and G.Woeginger, Online Algorithms: The State of the Art,
Springer} LNCS 1442, (1998), pp. 178-195.

%\bibitem{ABRS}
%B. Awerbuch, M. Betke, R. Rivest, M. Singh. Piecemeal graph exploration by a mobile agent. {\em Information and Computation}, 152 (1999), pp. 155-172.

%\bibitem{BCR}
%R.A.~Baeza Yates, J.C.~Culberson, and G.J.E.~Rawlins.
%Searching in the Plane.
%{\em Information and Computation}, {106(2)}, (1993), pp.~234-252.

\bibitem{BS}
R.A. Baeza-Yates, R. Schott:
Parallel Searching in the Plane. {\em Comput. Geom. 5}, (1995), pp.143-154.

%\bibitem{BFRSV}
%M. Bender, A. Fernandez, D. Ron, A. Sahai, S. Vadhan. The power of a pebble: exploring and mapping directed graphs. In {\em Proc. 30th STOC}, (1998), pp. 269-278.

\bibitem{BGI}
R. Bar-Yehuda, O. Goldreich, A. Itai: On the Time-Complexity of Broadcast in Multi-hop Radio Networks: An Exponential Gap Between Determinism and Randomization. J. Comput. Syst. Sci. 45(1): 104-126 (1992).

%\bibitem{BS}
%M. Bender, D. Slonim, D.: The power of team exploration: two agents can learn unlabeled directed graphs. In {\em Proc. 35th FOCS} (1994), pp. 75-85.

%\bibitem{BeRS}
%M. Betke, R.L. Rivest, M. Singh. Piecemeal learning of an unknown environment. {\em Machine Learning}, 18(2/3), (1995), pp. 231-254.

%\bibitem{BlRS}
%A. Blum, P. Raghavan, B. Schieber. Navigating in unfamiliar geometric terrain, {\em SIAM J. Comput.}, 26(1), (1997), pp. 110-137.

%\bibitem{Bunde}
%Bunde, D. P. Power-aware scheduling for makespan and flow. {\em SPAA} (2006), pp. 190-196.

\bibitem{NP} J. Chalopin, R. Jacob, M. Mihalak, P. Widmayer: Data Delivery by Energy-Constrained Mobile Agents on a Line. {\em Proc. ICALP (2)}, (2014), pp. 423-434.

%\bibitem{CJABL}
%F. Chen, M. P. Johnson, Y. Alayev, A. Bar-Noy, T. F. La Porta, Who, When, Where: Timeslot Assignment to Mobile Clients, {\em IEEE Transactions on Mobile Computing}, (2012), vol. 11, no. 1, pp. 73-85.

%\bibitem{CFPS}
%M. Cieliebak, P. Flocchini, G. Prencipe, N. Santoro. Solving the agents Gathering Problem, In {\em Proc ICALP}, (2003), pp. 1181-1196.

%\bibitem{CP}
%R.~Cohen and D.~Peleg.
%Convergence Properties of the Gravitational Algorithm in Asynchronous agent Systems.
%{\em SIAM J. on Comput.}, {34(6)}, (2005), pp.~1516-1528.

%\bibitem{C15}
%A. Cord-Landwehr, B. Degener, M. Fischer, M. Hullmann, B. Kempkes, A. Klaas, P. Kling, S. Kurras, M. Martens, F. Meyer auf der Heide, C. Raupach, K. Swierkot, D. Warner, C. Weddemann, D. Wonisch: A New Approach for Analyzing Convergence Algorithms for Mobile agents.  {\em Proc. ICALP (2)}, (2011), pp. 650-661.

%\bibitem{CGGKM}
%J. Czyzowicz, L. Gasieniec, K. Georgiou, E. Kranakis, F. MacQuarrie:
%The Beachcombers' Problem: Walking and Searching with Mobile agents. {\em SIROCCO}, (2014), pp. 23-36,

%\bibitem{DP}
%X. Deng, C. H. Papadimitriou, Exploring an unknown graph. In {\em Proc. 31st FOCS}, (1990), vol I, pp. 355-361.

%\bibitem{DFSY}
%S. Das, P. Flocchini, N. Santoro, M. Yamashita. On the Computational Power of Oblivious agents: Forming a Series of Geometric Patterns, In {\em Proc. PODC}, (2010), pp. 267-276.

%\bibitem{DDKPU}
%D. Dereniowski, Y. Disser, A. Kosowski, D. Pajak, P. Uznanski:
%Fast Collaborative Graph Exploration.  {\em Proc. ICALP (2)}, (2013), pp. 520-532.

\bibitem{DKS}
M. Dynia, M. Korzeniowski, C. Schindelhauer. Power-aware collective tree exploration. In {\em Proc. of ARCS} (2006), pp. 341-351.

%\bibitem{FPSW}
%P. Flocchini, G. Prencipe, N. Santoro, P. Widmayer. Gathering of asynchronous agents with limited visibility, {\em Th. Comp. Science}, 337, (2005), pp. 147-168.

\bibitem{FGKP}
P. Fraigniaud, L. Gasieniec, D. Kowalski, A. Pelc. Collective tree exploration. In {\em Proc. LATIN}, (2004), pp. 141-151.

%\bibitem{Gal}
%S.~Gal. Rendezvous search on the line. {\em Operations Research} {\bf 47} (1999), pp.~974-976.

%\bibitem{GJ79}
% M. R. Garey, and D. S. Johnson, Computers and Intractability; {\em A Guide to the Theory of NP-Completeness} (1979). pp. 96--105 and 224.

\bibitem{FHK}
G. Frederickson, M. Hecht, C. Kim. Approximation algorithms for some routing problems. {\em SIAM J. on Comput.}, 7 (1978), pp. 178-193.


%\bibitem{HCB}
%W. B. Heinzelman, A. P. Chandrakasan,  and H. Balakrishnan,
%An Application-Specific Protocol Architecture for Wireless Microsensor Networks, {\em Transactions on wireless communication}, Vol. 1, No. 4, (2002), pp. 660-670.

\bibitem{ISG}
S. Irani, S.K. Shukla, R.  Gupta. Algorithms for power savings. {\em ACM Trans. on Algorithms}, Vol. 3, No. 4, Article 41, (2007).

\bibitem{KK}
A. Kesselman and D. R. Kowalski, Fast distributed algorithm for convergecast in ad hoc geometric radio networks, {\em Journal of Parallel and Distributed Computing}, Vol. 66, No. 4 (2006), pp. 578-585.

\bibitem{KEW}
L. Krishnamachari, D.
Estrin, S. Wicker. The impact of data aggregation in wireless sensor networks. {\em ICDCS Workshops}, (2002), pp. 575-578.

%\bibitem{MMS}
%N. Megow, K. Mehlhorn, P. Schweitzer. Online Graph Exploration: New Results on Old and New Algorithms. {\em Proc. ICALP (2)}, (2011), pp. 478-489.

%\bibitem{NS}
%S. Nikoletseas  and P. G. Spirakis. Distributed Algorithms for Energy Efficient Routing and Tracking in Wireless Sensor Networks, {\em Algorithms}, 2,  (2009), pp.~121-157.


\bibitem{RV}
R. Rajagopalan, P. K. Varshney, Data-aggregation techniques in sensor networks: a survey {\em Comm. Surv. and Tutorials}, Vol. 8, No. 4, (2006), pp. 48 - 63.

%\bibitem{SL}
%Stojmenovic, I. and Lin, X. Power-Aware Localized Routing in Wireless Networks. {\em IEEE Trans. Parallel Distrib. Syst.} 12(11),  (2001), pp. 1122-1133.

%\bibitem{TZ}
%A. Ta-Shma and U. Zwick.  Deterministic rendezvous, treasure hunts and strongly universal exploration sequences. {\em Proc.\ 18th Annual ACM-SIAM Symposium on Discrete Algorithms} (SODA 2007), pp.~599--608.

%\bibitem{SY}
%{I. Suzuki and M. Yamashita},
% {Distributed Anonymous Mobile agents: Formation of Geometric Patterns},
%{\em SIAM J. Comput.}, vol. {28}(4), {(1999)},  pp.  {1347-1363}.

%\bibitem{YS}
%M. Yamashita, I. Suzuki. Characterizing geometric patterns formable by oblivious anonymous mobile agents. {\em Th. Comp. Science}, 411(26-28), 2010, pp. 2433-2453.

\bibitem{YDS}
F.F. Yao, A.J. Demers, S. Shenker, A scheduling model for reduced CPU energy. In {\em Proc. of 36th FOCS} (1995), pp.  374-382.
\end{thebibliography}
\end{document}